\newcommand{\Rmnum}[1]{\expandafter\@slowromancap\romannumeral #1@}
\newtheorem{lemma}{Lemma}
\newtheorem{Theorem}{Theorem}
\def \wh#1{\widehat{#1}}
\def \wt#1{\widetilde{#1}}
\def \dh#1{\underaccent{\hat}{#1}}
\def\dwh{\underaccent{{\cc@style\widehat{\mskip10mu}}}}
\def \dt#1{\underaccent{\tilde}{#1}}
\makeatother \numberwithin{equation}{section}
\newtheorem{prop}{Proposition}
\newcommand{\nn}{\nonumber}
\newcommand{\ga}{\gamma}
\newcommand{\tta}{\theta}
\newcommand{\vta}{\vartheta}
\newcommand{\del}{\delta}
\newcommand{\al}{\alpha}
\newcommand{\be}{\beta}
\newcommand{\vrr}{\varrho}
\newcommand{\Ph}{\boldsymbol{\phi}}
\newcommand{\Ps}{\boldsymbol{\psi}}
\newcommand{\Ch}{\boldsymbol{\chi}}
\newcommand{\Vph}{\boldsymbol{\varphi}}
\newcommand{\Vap}{\boldsymbol{\varpi}}
\newcommand{\bA}{\boldsymbol{A}}
\newcommand{\bB}{\boldsymbol{B}}
\newcommand{\bC}{\boldsymbol{C}}
\newcommand{\bD}{\boldsymbol{D}}
\newcommand{\UU}{Z}
\newcommand{\UM}{S}
\def \cd#1{\accentset{\circ}{#1}}
\def \dc#1{\underaccent{\circ}{#1}}
\def \d#1{\accentset{\bullet}{#1}}
\def \dd#1{\underaccent{\bullet}{#1}}
\begin{document}

\title{Rational solutions to the ABS list: Degenerating approach}

\author{Song-lin Zhao$^{1}$\footnote{E-mail: songlinzhao@zjut.edu.cn}~~,~~Da-jun Zhang$^{2}$\footnote{E-mail: djzhang@staff.shu.edu.cn}\\
{\small\it $^{1}$ Department of Applied Mathematics, Zhejiang University of Technology, Hangzhou 310023, P.R. China}\\
{\small \it $^{2}$ Department of Mathematics, Shanghai University, Shanghai 200444, P.R. China}}

\maketitle

\begin{abstract}
In the paper we first construct rational solutions for the Nijhoff-Quispel-Capel (NQC) equation by means of bilinear method.
These solutions can be transferred to those of Q3$_\delta$ equation in the Adler-Bobenko-Suris (ABS) list.
Then making use of degeneration relation we obtain rational solutions for
Q2, Q1$_\delta$, H3$_\delta$, H2 and H1.
These rational solutions are in Casoratian form and the basic
column vector satisfies an extended condition equation set.

\end{abstract}

\vskip 5pt \noindent
\textbf{Keywords}: lattice KdV-type equations, ABS list, Casoratian, rational solutions\\
\textbf{PACS numbers}: 02.30.Ik, 05.45.Yv



\section{Introduction}

As one of interpretations of integrability of lattice equations,
multidimensional consistency \cite{Nijhoff-MDC} has become increasingly popular in the recent years.
With this property and two mild additional requirements on the equations: symmetry and
the so-called `tetrahedron property', Adler, Bobenko and Suris
classified the integrable models defined on an elementary quadrilateral \cite{ABS-CMP-2003}.
The corresponding result is named as ABS list, which consists of nine lattice equations:
${\rm Q4, Q3_{\del}, Q2, Q1_{\del}, A2,}$ ${\rm  A1_{\del}, H3_{\del}, H2, H1}$.
These equations are of form
\begin{subequations}\label{H/Q-list}
\begin{align}
\label{Q4} \mbox{Q4}:~~& p'(u\wt{u}+\wh{u}\wh{\wt{u}})-q'(u\wh{u}+\wt{u}\wh{\wt{u}}) \nn \\
~~&~~=\frac{ p'Q'-q'P'}{1-p'^2q'^2}\left((\wh{u}\wt{u}+u\wh{\wt{u}})-
p'q'(1+u\wt{u}\wh{u}\wh{\wt{u}})\right),\\
\label{Q3} \mbox{Q3}_{\del}:~~& p'(1-q'^2)(u\wh{u}+\wt{u}\wh{\wt{u}})-q'(1-p'^2)(u\wt{u}+\wh{u}\wh{\wt{u}}) \nn \\
~~&~~=(p'^2-q'^2)\left((\wh{u}\wt{u}+u\wh{\wt{u}})+\del^2\frac{(1-p'^2)(1-q'^2)}{4p'q'}\right), \\
\label{Q2} \mbox{Q2}:~~&p'(u-\wh{u})(\wt{u}-\wh{\wt{u}})-q'(u-\wt{u})(\wh{u}-\wh{\wt{u}})
+p'q'(p'-q')(u+\wt{u}+\wh{u}+\wh{\wt{u}})\nn\\
~~&~~=p'q'(p'-q')(p'^2-p'q'+q'^2),\\
\label{Q1} \mbox{Q1}_{\del}:~~& p'(u-\wh{u})(\wt{u}-\wh{\wt{u}})-q'
(u-\wt{u})(\wh{u}-\wh{\wt{u}})=\del^2p'q'(q'-p'),\\
\label{H3} \mbox{H3}_{\del}:~~& p'(u\wt{u}+\wh{u}\wh{\wt{u}})-q'(u\wh{u}+\wt{u}\wh{\wt{u}})=
\delta(q'^2-p'^2),\\
\label{H2} \mbox{H2}:~~&(u-\wh{\wt{u}})(\wt{u}-\wh{u})=(p'-q')(u+\wt{u}+
\wh{u}+\wh{\wt{u}})+p'^2-q'^2,\\
\label{H1} \mbox{H1}:~~&(u-\wh{\wt{u}})(\wt{u}-\wh{u})=p'-q',
\end{align}
\end{subequations}
where $P'^2=p'^4-\ga p'^2+1$ and $Q'^2=q'^4-\ga q'^2+1$.
We omit ${\rm A1_{\delta}}$ and ${\rm A2}$ from the above list
because of the equivalence between ${\rm A1_{\delta}}$ and ${\rm Q1}$ by
$u\to(-1)^{n+m}u$, as well as ${\rm A2}$ and ${\rm Q3_{\delta=0}}$ by $u\to
u^{(-1)^{n+m}}$. In equations \eqref{H/Q-list}, $\delta$ is a constant;
$u=u_{n,m}:=u(n,m)$ denotes the dependent
variable of the lattice points labeled by $(n,m)\in \mathbb{Z}^2$;
$p'$ and $q'$ are the continuous lattice parameters associated with the grid
size in the directions of the lattice given by the independent variables $n$ and $m$, respectively;
notations with elementary lattice shifts are denoted by
\[\wt{u}=u_{n+1,m},\quad \wh{u}=u_{n,m+1},\quad \wh{\wt u}=u_{n+1,m+1}.\]
Some of these equations have been known before, for example,
${\rm H1}$ is the lattice potential Korteweg de-Vries (lpKdV) equation \cite{NQC-PLA-1983},
${\rm H3_{\del=0}}$ is the lattice potential modified KdV (lpmKdV) equation \cite{NQC-PLA-1983},
${\rm Q1_{\del=0}}$ is the lattice Schwarzian KdV (lSKdV) equation \cite{NC-KdV} and
${\rm Q4}$ is known as the Adler's equation \cite{Adler-Q4}, which is the nonlinear superposition principle
for B\"{a}cklund transformation of the Krichever-Novikov equation.
There are many ways of degenerations among these lattice equations in the list \eqref{H/Q-list} \cite{ABS-CMP-2003,BT-At,Nijhoff-CM-2009}.

Various of approaches have been shown to be effective in deriving soliton
solutions for the ABS list \eqref{H/Q-list} as evidenced by series of papers.
Atkinson {\it et al.} constructed $N$-soliton solutions to ${\rm Q3_{\del}}$
in terms of the $\tau$-function of the Hirota-Miwa equation.
The corresponding solutions were expressed by the usual
Hirota's polynomial of exponentials \cite{AHN-Q3}. By developing Hirota's direct method, Hietarinta and Zhang derived
$N$-soliton solutions to H-series of equations and ${\rm Q1}$ \cite{HZ-ABS}.
This method is algorithmic and based on multidimensional consistency,
progressing in each case from background solution to 1-soliton solution to $N$-soliton solutions,
where many Casoratian shift formulae were established. Meanwhile, Nijhoff and his collaborators proposed Cauchy matrix approach \cite{Nijhoff-CM-2009}
to catch the $N$-soliton solutions for the ABS list except for the elliptic case of ${\rm Q4}$.
The authors of the present paper extended
Cauchy matrix approach to a generalized case \cite{ZZ-SAM-2013}, which can be
used to construct more kinds of exact solutions beyond soliton solutions for integrable systems (see also Ref. \cite{XZZ-KdV-SG}), such as,
multiple-pole solutions. By setting initial value problem, Inverse Scattering Transform was also established to
solve ${\rm H1}$ \cite{H1-IST} and the ABS list \cite{ABS-IST}. As the `master' and the most complicate equation in this list,
${\rm Q4}$ was solved by using the B\"{a}cklund transformation \cite{Atkinson-CMP}.

Different from soliton solution, rational solution is usually expressed by fraction of polynomials. Generally speaking,
such type of solutions can be derived from soliton solutions through a special limit
procedure (see Refs. \cite{AS-RS,ZDJ-Wronskian} as examples). Compared with the case in
continuous integrable system, it is more difficult to get the rational solutions of lattice equations.
In spite of this, until now much progress has been got.
Algebraic solutions and lump-like solutions for the Hirota-Miwa
equation were, respectively, given in Refs. \cite{Kaji} and \cite{Willox}.
With the help of bilinear method \cite{HZ-ABS}, rational solutions for ${\rm H3_{\delta}}$ and ${\rm Q1_{\delta}}$
as well as lattice Boussinesq equation were shown in recent
papers \cite{SZ-H3Q1,NZ-BSQ}. Besides, by imposing reduction conditions on rational
solutions for the Hirota-Miwa equation, rational solutions for lpKdV
equation and two semi-discrete lpKdV equations were obtained \cite{FZS}.
Recently, transformation approach was proposed to construct the rational solutions for the whole ABS list except for
Q3$_{\del}$ and ${\rm Q4}$ \cite{ZZ-ABS-RS}.

The present paper is devoted to investigating rational solutions for the ABS list
with a different methodology, where ${\rm Q4}$ is excluded.
The paper is organised as follows. In Sec.2, some necessary materials are displayed as preliminary, including the re-parameterized ABS lattices and Casoratian.
In Sec.3, we derive the rational solutions to lattice KdV-type equations involving lpKdV equation,
lpmKdV equation, NQC equation, together with two Miura transformations.
In Sec.4, rational solutions for ${\rm Q3_{\del}}$ are presented.
Furthermore, degenerations will be
considered to derive the rational solutions for
``lower equations'' ${\rm Q2,Q1_{\del},H3_{\del},H2}$ and ${\rm H1}$.
Sec.5 is for conclusions. In addition, an appendix is given as a complement to the paper.

\section{Preliminary} \label{Sec-Preliminary}

Some new parameters are usually introduced such that the list \eqref{H/Q-list} can be handled easily.
For example, in Ref.\cite{Nijhoff-CM-2009} the ABS lattice equations \eqref{H/Q-list} except for ${\rm Q4}$ were re-parameterized
so that their solutions can be expressed through Cauchy matrices.
These re-parametrisations are given by \cite{Nijhoff-CM-2009}
\begin{equation}
\label{ABS-trans}
\begin{array}{lll}
\mbox{Q3}_{\del}: & ~~p'=\frac{P}{p^2-a^2}=\frac{p^2-b^2}{P}, &~~q'=\frac{Q}{q^2-a^2}=\frac{q^2-b^2}{Q}, \\
\mbox{Q2,Q1}_{\del}: & ~~p'=\frac{a^2}{p^2-a^2}, &~~q'=\frac{a^2}{q^2-a^2}, \\
\mbox{H3}_{\del}: & ~~ p'=\frac{P}{a^2-p^2}=\frac{1}{P}, &~~q'=\frac{Q}{a^2-q^2}=\frac{1}{Q}, \\
\mbox{H2,H1}: & ~~p'=-p^2, &~~q'=-q^2.
\end{array}
\end{equation}
And the re-parameterized lattice equations are
\begin{subequations}
\label{ABS-list-p}
\begin{align}
\mbox{Q3}_{\del}: &~~P(u\widehat{u}+\widetilde{u}\widehat{\widetilde{u}})
-Q(u\widetilde{u} +\widehat{u}\widehat{\widetilde{u}}) =(p^2-
q^2)\big((\widetilde{u}
\widehat{u}+u\widehat{\widetilde{u}})+\frac{\delta^2}{4PQ}\big),
\label{eq:Q3parm}\\
\mbox{Q2}: &~~(q^2-a^2)(u-\wh{u})(\wt{u}-\wh{\wt{u}})-(p^2-a^2)(u-\wt{u})(\wh{u}-\wh{\wt{u}})\nonumber\\
&~~~~ + (p^2-a^2)(q^2-a^2)(q^2-p^2)(u+\wt{u}+\wh{u}+\wh{\wt{u}})\nonumber\\
&~~ =(p^2-a^2)(q^2-a^2)(q^2-p^2)\big((p^2-a^2)^2+(q^2-a^2)^2-(p^2-a^2)(q^2-a^2)\big),
\label{Q2-p}\\
\mbox{Q1}_{\del}: &~~(q^2-a^2)(u-\wh{u})(\wt{u}-\wh{\wt{u}})-
(p^2-a^2)(u-\wt{u})(\wh{u}-\wh{\wt{u}}) =\frac{\delta^2a^4(p^2-q^2)}{(p^2-a^2)(q^2-a^2)},
\label{Q1-p}\\
\mbox{H3}_{\del}: &~~  P(a^2-q^2)(u\wt{u}+\wh{u}\wh{\wt{u}})-Q(a^2-p^2)
(u\wh{u}+\wt{u}\wh{\wt{u}})= \delta(p^2-q^2), \label{H3-p} \\
\mbox{H2}: &~~ (u-\wh{\wt{u}})(\wt{u}-\wh{u})+(p^2-q^2)
(u+\wt{u}+\wh{u}+\wh{\wt{u}})=p^4-q^4, \label{H2-p}\\
\mbox{H1}: &~~ (u-\wh{\wt{u}})(\wh{u}-\wt{u})=p^2-q^2, \label{H1-p}
\end{align}
\end{subequations}
where in \eqref{eq:Q3parm} $(p,P)=\mathfrak{p}$ and $(q,Q)=\mathfrak{q}$ are the points on the elliptic curve
\begin{equation}
 \{(x,X)|X^2 =(x^2-a^2)(x^2-b^2)\},
\label{elliptic-Q3}
\end{equation}
in \eqref{H3-p}
\begin{align}
P^2=a^2-p^2,~~ Q^2=a^2-q^2,
\label{eq:parcurves}
\end{align}
and in ${\rm Q3_{\del}}$ and ${\rm Q2}$ the dependent variable $u$ has been scaled by
\[u\to u(b^2-a^2),~~ u\to \frac{a^4 u}{(p^2-a^2)^2 (q^2-a^2)^2},\]
respectively. In the present paper, we focus on constructing rational solutions for the
ABS list \eqref{ABS-list-p}, where some Casoratian techniques developed in recent
literatures \cite{Kajiwara,HZ-ABS} will be adopted.

Casoratian can be viewed as the discrete version of Wronskian.
In general, for a given basic column vector
\begin{equation}\label{psi-vector}
\Ph(\al,\be,l)=(\phi_1(\al,\be,l),\phi_2(\al,\be,l),\cdots,\phi_N(\al,\be,l))^T,
\end{equation}
with $\{\phi_j(\al,\be,l)=\phi_j(n,m,\al,\be,l)\}$, the corresponding Casoratian can be written as
\begin{eqnarray}
\label{Caso}
f &=& |\Ph(\al,\be,0),\Ph(\al,\be,1),\ldots,\Ph(\al,\be,N-1)| \nn \\
~~~~~~~~~~~~~~~~~~~~~~~ &=& |0_{\al,\be},1_{\al,\be},\ldots,N-1_{\al,\be}|\nn \\
~~~~~~~~~~~~~~~~~~~~~~~ &=& |(0,1,\ldots,N-1)_{\al,\be}|.
\end{eqnarray}
Here and hereafter, we employ the  short-hand notations \cite{FN-KP-Wro}, such as
\begin{subequations}
\begin{eqnarray}
\label{Caso-exa}
|(0,1,\ldots,N-1)_{\al,\be}| &=& |(\wh{N-1})_{\al,\be}|, \\
|(0,1,\ldots,N-2,N)_{\al,\be}| &=& |(\wh{N-2},N)_{\al,\be}|.
\end{eqnarray}
\end{subequations}

In addition to the above notations, we need the following Laplace expansion
identity for Casoratian verification.
\begin{lemma} \cite{FN-KP-Wro}\label{Lemm-Babcd}
Suppose that $\mathbf{G}$ is a $N\times(N-2)$ matrix, and
\textbf{a},\textbf{b},\textbf{c},\textbf{d} are $N$th-order column
vectors, then
\begin{equation}
|\mathbf{G},\mathbf{a},\mathbf{b}||\mathbf{G},\mathbf{c},\mathbf{d}|
-|\mathbf{G},\mathbf{a},\mathbf{c}||\mathbf{G},\mathbf{b},\mathbf{d}|
+|\mathbf{G},\mathbf{a},\mathbf{d}||\mathbf{G},\mathbf{b},\mathbf{c}|=0.
\end{equation}
\end{lemma}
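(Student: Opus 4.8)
The plan is to recognize the identity as the simplest Grassmann--Pl\"ucker relation and to reduce it, in two steps, to a trivial $2\times 2$ determinant identity. First I would dispose of the degenerate case: if the $N-2$ columns of $\mathbf{G}$ are linearly dependent, then every $N\times N$ determinant $|\mathbf{G},\ast,\ast|$ appearing in the claim has linearly dependent columns and vanishes, so the identity reads $0=0$. Hence assume $\operatorname{rank}\mathbf{G}=N-2$ and set $V=\operatorname{col}(\mathbf{G})$, a subspace of codimension $2$ in $\mathbb{F}^{N}$ (with $\mathbb{F}$ the underlying field).

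The key step is the reduction. For fixed $\mathbf{G}$ the map $f(\mathbf{x},\mathbf{y})=|\mathbf{G},\mathbf{x},\mathbf{y}|$ is bilinear, antisymmetric under $\mathbf{x}\leftrightarrow\mathbf{y}$, and vanishes whenever $\mathbf{x}\in V$ or $\mathbf{y}\in V$ (again because one column becomes a combination of the columns of $\mathbf{G}$). Hence $f$ factors through the $2$-dimensional quotient $W=\mathbb{F}^{N}/V$, yielding an alternating bilinear form on $W$; since every such form on a $2$-dimensional space is a scalar multiple of the $2\times 2$ determinant in a fixed basis, there is a scalar $c=|\mathbf{G},\mathbf{e}_{1},\mathbf{e}_{2}|\neq 0$ (for any lift $\mathbf{e}_{1},\mathbf{e}_{2}$ of a basis of $W$) with
\[
|\mathbf{G},\mathbf{x},\mathbf{y}|=c\,\det(\bar{\mathbf{x}},\bar{\mathbf{y}}),
\]
where $\bar{\mathbf{x}}\in\mathbb{F}^{2}$ denotes the coordinates of the class of $\mathbf{x}$. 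Substituting this for all six determinants on the left-hand side of the lemma extracts a common factor $c^{2}$ and leaves
\[
\det(\bar{\mathbf{a}},\bar{\mathbf{b}})\det(\bar{\mathbf{c}},\bar{\mathbf{d}})-\det(\bar{\mathbf{a}},\bar{\mathbf{c}})\det(\bar{\mathbf{b}},\bar{\mathbf{d}})+\det(\bar{\mathbf{a}},\bar{\mathbf{d}})\det(\bar{\mathbf{b}},\bar{\mathbf{c}}),
\]
the Pl\"ucker relation for four vectors in the plane, whose vanishing is a one-line check: writing $\det(\bar{\mathbf{u}},\bar{\mathbf{v}})=u_{1}v_{2}-u_{2}v_{1}$, the expansion has six nonzero monomials which cancel in pairs.

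If one prefers to stay entirely elementary, the same conclusion is obtained without the quotient: the $N+1$ vectors consisting of the columns of $\mathbf{G}$ together with $\mathbf{a},\mathbf{b},\mathbf{c}$ are linearly dependent; if the dependence uses the columns of $\mathbf{G}$ only we are back in the trivial case, and otherwise one of $\mathbf{a},\mathbf{b},\mathbf{c}$, say $\mathbf{c}=\alpha\mathbf{a}+\beta\mathbf{b}+(\text{a combination of the columns of }\mathbf{G})$, can be eliminated; plugging this into the three determinants that contain $\mathbf{c}$ and using multilinearity together with $|\mathbf{G},\ast,\mathbf{g}_{j}|=0$ collapses the left-hand side to $0$ after an immediate cancellation. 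I do not anticipate a real obstacle here; the only care needed is the small case analysis (dependence of $\mathbf{G}$ alone, and which coefficient in the dependence is nonzero) and, on the first route, keeping orientation conventions consistent so that the induced form is literally $c\det$. This is in fact the classical Wronskian/Casoratian identity of Freeman and Nimmo \cite{FN-KP-Wro}, so for a fully written proof --- e.g.\ via Laplace expansion of a suitable singular $2N\times 2N$ determinant, at the cost of heavier sign bookkeeping --- one may simply appeal to that reference.
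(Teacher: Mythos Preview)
Your argument is correct: both the quotient-to-$2$-dimensions route and the elementary linear-dependence elimination establish the identity, and your handling of the degenerate case $\operatorname{rank}\mathbf{G}<N-2$ is fine. Note, however, that the paper does not actually prove this lemma; it is stated with a citation to Freeman and Nimmo \cite{FN-KP-Wro} and used as a tool, so there is no in-paper proof to compare against --- your closing remark already anticipates this.
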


\section{Rational solutions for lattice KdV-type equations}

In this section, we first review the usual Casoratian solutions for the lattice KdV-type equations,
including lpKdV equation, lpmKdV equation, NQC equation and two Miura transformations. Then by introducing
an extended condition equation set, we derive rational solutions for the lattice KdV-type equations.
The results play key roles in the construction of the rational solutions for ${\rm Q3_{\del}}$.

\subsection{Casoratian solutions for lpKdV equation}

The lpKdV equation reads
\begin{eqnarray}
(p-q+\wh{w}-\wt{w})(p+q+w-\wh{\wt{w}})=p^2-q^2,
\label{lpKdV}
\end{eqnarray}
which is equivalent to equation \eqref{H1-p} by a change of dependent
variable $w=u+np+mq+u_0$ ($u_0$ is a constant) and reduces to the pKdV equation after a double
continuum limit. The lpKdV equation \eqref{lpKdV} admits bilinear form \cite{HZ-ABS}
\begin{subequations}
\begin{align}
&\mathcal{H}_1\equiv \wh g\wt f-\wt g \wh f+(p-q)(\wh f \wt f-f\wh{\wt f})=0,\label{lpKdV-Bili-1}\\
&\mathcal{H}_2\equiv g \wh{\wt f}-\wh{\wt g}f+(p+q)(f\wh{\wt f}-\wh f\wt f)=0 \label{lpKdV-Bili-2}
\end{align}
\label{lpKdV-Bili}
\end{subequations}
under dependent transformation
\begin{align}
w=\frac{g}{f}. \label{w}
\end{align}

Casoratian solutions to equation \eqref{lpKdV-Bili}
can be summarized by the following result \cite{ZH-H1-GE}.
\begin{prop}
\label{Pro-lpKdV-CS}
The bilinear equation \eqref{lpKdV-Bili} possesses Casoratians
\begin{equation}
f=|(\wh{N-1})_{0,0}|,~~g=|(\wh{N-2},N)_{0,0}|,
\label{fg-lpKdV}
\end{equation}
in which the column vector $\Ph(0,0,l)$ satisfies
\begin{subequations}
\begin{eqnarray}
p\Ph(\al,\be,l) &=& \wt{\Ph}(\al,\be,l)-\Ph(\al,\be,l+1), \label{eq:CES-1a} \\
q\Ph(\al,\be,l) &=& \wh{\Ph}(\al,\be,l)-\Ph(\al,\be,l+1), \label{eq:CES-1b} \\
p\Ps(\al,\be,l) &=& \wt{\Ps}(\al,\be,l)-\Ps(\al,\be,l+1), \label{eq:CES-1c} \\
q\wh{\Ps}(\al,\be,l) &=& \Ps(\al,\be,l)+\wh{\Ps}(\al,\be,l+1), \label{eq:CES-1d} \\
\Ph(\al,\be,l) &=& \bA_{[m]}\Ps(\al,\be,l), \label{eq:CES-1e}
\end{eqnarray}
\label{eq:CES-1}
\end{subequations}
where $\Ps(\al,\be,l)=(\psi_1(\al,\be,l),\psi_2(\al,\be,l),\ldots,\psi_N(\al,\be,l))^T$ is an auxiliary vector and $\bA_{[m]}$ is a
$N\times N$ matrix that only depends on $m$ but is independent of
$n,\al,\be$ and $l$.

\end{prop}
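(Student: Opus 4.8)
The plan is to verify the bilinear identities \eqref{lpKdV-Bili-1} and \eqref{lpKdV-Bili-2} directly, by substituting the Casoratians \eqref{fg-lpKdV} and reducing each of $\mathcal{H}_1,\mathcal{H}_2$ to an algebraic identity among Casoratians that is a single instance of the Laplace expansion in Lemma~\ref{Lemm-Babcd}. The inputs that actually enter the identity check are the shift relations \eqref{eq:CES-1a}--\eqref{eq:CES-1b} for $\Ph$; the auxiliary relations \eqref{eq:CES-1c}--\eqref{eq:CES-1e} are what make such a $\Ph$ available (and, in the rational version of this section, what gets modified), but they are not used in the verification itself.

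\textbf{Step 1 (shift formulae).} Rewriting \eqref{eq:CES-1a} as $\wt{\Ph}(l)=\Ph(l+1)+p\,\Ph(l)$ and using multilinearity of the determinant together with the vanishing of a Casoratian with a repeated column, I would express $\wt f$ and $\wt g$ as finite sums of Casoratians whose column-label set is $(0,1,\dots,N-1)$, resp.\ $(0,\dots,N-2,N)$, with a top block of consecutive labels shifted up by one; the same computation with $p\to q$ and \eqref{eq:CES-1b} gives $\wh f,\wh g$, and composing the two directions produces $\wh{\wt{\Ph}}(l)=\Ph(l+2)+(p+q)\Ph(l+1)+pq\,\Ph(l)$, hence $\wh{\wt f},\wh{\wt g}$. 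Only the top few Casoratians --- $f=|(\wh{N-1})_{0,0}|$, $g=|(\wh{N-2},N)_{0,0}|$, $|(\wh{N-2},N+1)_{0,0}|$, $|(\wh{N-3},N-1,N)_{0,0}|$, $|(\wh{N-3},N,N+1)_{0,0}|$ and their counterparts --- can survive in \eqref{lpKdV-Bili}, so it suffices to keep these. Equivalently one may record two-term ``mixed'' formulae obtained by substituting $\Ph(l+1)=\wt{\Ph}(l)-p\,\Ph(l)$ column by column from the right, e.g.\ $g+pf=|(\wh{N-2})_{0,0},\wt{\Ph}(N-1)|$, which is often more convenient.

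\textbf{Step 2 (reduction).} Inserting the shift formulae into $\mathcal{H}_1=\wh g\,\wt f-\wt g\,\wh f+(p-q)(\wh f\,\wt f-f\,\wh{\wt f})$, the contributions carrying pure powers $p^iq^j$ cancel among themselves, and the residue is a three-term combination of products of Casoratians. Applying Lemma~\ref{Lemm-Babcd} with $\mathbf{G}$ the block of columns $\Ph(0),\dots,\Ph(N-3)$ and $\{\mathbf{a},\mathbf{b},\mathbf{c},\mathbf{d}\}=\{\Ph(N-2),\Ph(N-1),\Ph(N),\Ph(N+1)\}$ in a suitable order, this combination vanishes identically, so $\mathcal{H}_1=0$. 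The identity $\mathcal{H}_2=g\,\wh{\wt f}-\wh{\wt g}\,f+(p+q)(f\,\wh{\wt f}-\wh f\,\wt f)=0$ is handled the same way; since $g$, and hence $\wh{\wt g}$, reaches one index further, it is cleanest to first replace $f$ and $g$ by their mixed forms so that all columns occurring lie in a common index window, after which one or two more applications of Lemma~\ref{Lemm-Babcd} close the argument.

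\textbf{Main obstacle.} There is no conceptual difficulty; the work is entirely the bookkeeping of Step 2. After the shift formulae are inserted one must group the resulting Casoratian products so that the spurious $p^iq^j$ pieces visibly cancel and the residue is \emph{exactly} one Laplace identity --- i.e.\ one must identify the correct four columns $\mathbf{a},\mathbf{b},\mathbf{c},\mathbf{d}$ and the common $(N-2)$-column block $\mathbf{G}$. The mixed shift $\wh{\wt{\,\cdot\,}}$ and the fact that $g$ skips the column $\Ph(N-1)$ --- so that its shifts pull in $\Ph(N+1)$ and $\Ph(N+2)$ --- are what make the matching fiddly; fixing a uniform index window at the outset and, where needed, rewriting $f,g$ through mixed Casoratians removes the ambiguity and makes both $\mathcal{H}_1=0$ and $\mathcal{H}_2=0$ drop out of Lemma~\ref{Lemm-Babcd}.
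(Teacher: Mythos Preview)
Your plan works for $\mathcal{H}_1$ but contains a genuine gap for $\mathcal{H}_2$: the claim that ``the auxiliary relations \eqref{eq:CES-1c}--\eqref{eq:CES-1e} \dots\ are not used in the verification itself'' is false. Those relations are not mere existence statements; combined with \eqref{eq:CES-1b} they force the second--order recursion
\[
\Ph(\al,\be,l+2)=\bigl(q^{2}\bI-\wh{\bA}_{[m]}\bA_{[m]}^{-1}\bigr)\,\Ph(\al,\be,l),
\]
and this constraint is \emph{exactly} what makes $\mathcal{H}_2$ vanish. If you carry out your Step~2 for $N=2$ using only \eqref{eq:CES-1a}--\eqref{eq:CES-1b}, the expansion gives
\[
\mathcal{H}_2=\bigl(|0,2|\,|2,3|-|0,1|\,|2,4|\bigr)+(p+q)\bigl(|0,2|\,|1,3|-|0,1|\,|1,4|-|1,2|^{2}\bigr)+pq\bigl(|0,2|\,|0,3|-|0,1|\,|0,4|-|0,2|\,|1,2|\bigr),
\]
and none of the three brackets is a Pl\"ucker relation; for generic columns $\Ph(0),\dots,\Ph(4)$ (take e.g.\ $\Ph(0)=(1,0)^T$, $\Ph(1)=(0,1)^T$, $\Ph(2)=(1,1)^T$, $\Ph(3)=(1,2)^T$, $\Ph(4)=(2,1)^T$) one gets $\mathcal{H}_2=2(1+pq)\neq 0$. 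Only after imposing $\Ph(l+2)=M\Ph(l)$ do all three brackets collapse to zero. So your proposed four columns $\{\Ph(N-2),\Ph(N-1),\Ph(N),\Ph(N+1)\}$ cannot feed a single instance of Lemma~\ref{Lemm-Babcd} for $\mathcal{H}_2$.

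The paper's route (via \cite{ZH-H1-GE}) handles this by working with the down--tilde shifted identity $\dt{\mathcal{H}}_2$ and using the $\bA_{[m]}$--dependent shift formulae \eqref{eq:lpKdV-iden-2b}, \eqref{eq:lpKdV-iden-2e}, \eqref{eq:lpKdV-iden-2g} for $\wh f$, $\wh{\dt f}$ and $\wh g-q\wh f$; these formulae encode precisely the auxiliary system \eqref{eq:CES-1c}--\eqref{eq:CES-1e} and produce single Casoratians with one modified last column $\bA_{[m]}\wh{\bA}_{[m]}^{-1}\wh\Ph(0,0,N-2)$, after which a single application of Lemma~\ref{Lemm-Babcd} with $\mathbf{G}=(\wh{N-3})_{0,0}$ and $(\mathbf{a},\mathbf{b},\mathbf{c},\mathbf{d})=(N-2_{0,0},\,N-1_{0,0},\,\dt\Ph(0,0,N-2),\,\bA_{[m]}\wh{\bA}_{[m]}^{-1}\wh\Ph(0,0,N-2))$ finishes the job. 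For $\mathcal{H}_1$ your forward--shift expansion is fine and equivalent to the paper's down--shifted treatment via \eqref{eq:lpKdV-iden-1a}, \eqref{eq:lpKdV-iden-1b}, \eqref{eq:lpKdV-iden-1e}, \eqref{eq:lpKdV-iden-1f}, \eqref{eq:lpKdV-iden-1g}, which indeed uses only \eqref{eq:CES-1a}--\eqref{eq:CES-1b}.
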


For the detailed proof, one can see Ref. \cite{ZH-H1-GE}.

\subsection{Casoratian solutions for lpmKdV equation}

The lpmKdV equation can be described as
\begin{eqnarray}
v_{a}((p-a)\wh{v}_{a}-(q-a)\wt{v}_{a})=
\wh{\wt{v}}_{a}((p+a)\wt{v}_{a}-(q+a)\wh{v}_{a}),
\label{lpmKdV}
\end{eqnarray}
where $a$ is a non-zero constant.
Under transformation
\begin{align}
v_{a}=\frac{h}{f},
\label{va}
\end{align}
equation \eqref{lpmKdV} is bilinearized into
\begin{subequations}
\begin{align}
&\mathcal{H}_{11}\equiv(p-a)\wh{\wt{f}}h+(q+a)f\wh{\wt{h}}-(p+q)\wh{f}\wt{h}=0,\label{lpmKdV-Bili-1a}\\
&\mathcal{H}_{12}\equiv(p+a)f\wh{\wt{h}}+(q-a)\wh{\wt{f}}h-(p+q)\wt{f}\wh{h}=0,\label{lpmKdV-Bili-1b}
\end{align}
\label{lpmKdV-Bili-1}
\end{subequations}
or
\begin{subequations}
\begin{align}
&\mathcal{H}_{21}\equiv(p-a)\wt f\wh h-(q-a)\wh f \wt h-(p-q)f\wh{\wt{h}}=0,\label{lpmKdV-Bili-2a}\\
&\mathcal{H}_{22}\equiv(p+a)\wh f\wt h-(q+a)\wt f \wh
h-(p-q)\wh{\wt{f}}h=0.\label{lpmKdV-Bili-2b}
\end{align}
\label{lpmKdV-Bili-2}
\end{subequations}

The Casoratian solutions to these two bilinear equations are shown as follows.
\begin{prop}
\label{Pro-lpmKdV-CS}
The bilinear equations \eqref{lpmKdV-Bili-1} and \eqref{lpmKdV-Bili-2} possess Casoratians
\begin{equation}
f=|(\wh{N-1})_{0,0}|,~~h=a^N|(\wh{N-1})_{-1,0}|,
\label{fh-lpmKdV}
\end{equation}
in which the column vectors $\Ph(0,0,l)$ and $\Ph(-1,0,l)$ satisfy system \eqref{eq:CES-1}
together with
\begin{subequations}
\begin{eqnarray}
a\Ph(\al,\be,l) &=& \cd{\Ph}(\al,\be,l)-\Ph(\al,\be,l+1), \label{eq:CES-2a} \\
a\Ps(\al,\be,l) &=& \cd{\Ps}(\al,\be,l)-\Ps(\al,\be,l+1), \label{eq:CES-2b} \\
p\wt{\Vph}(\al,\be,l) &=& \Vph(\al,\be,l)+\wt{\Vph}(\al,\be,l+1), \label{eq:CES-2c} \\
q\Vph(\al,\be,l) &=& \wh{\Vph}(\al,\be,l)-\Vph(\al,\be,l+1), \label{eq:CES-2d} \\
a\Vph(\al,\be,l) &=& \cd{\Vph}(\al,\be,l)-\Vph(\al,\be,l+1), \label{eq:CES-2e} \\
\Ph(\al,\be,l) &=& \bB_{[n]}\Vph(\al,\be,l), \label{eq:CES-2f}
\end{eqnarray}
\label{eq:CES-2}
\end{subequations}
where $\Vph(\al,\be,l)=(\varphi_1(\al,\be,l),\varphi_2(\al,\be,l),\ldots,\varphi_N(\al,\be,l))^T$ is an auxiliary vector and~
$\cd{\phantom{a}}$ denotes shift with respect to $\al$, i.e., $\cd{\Ph}(\al,\be,l)=\Ph(\al+1,\be,l)$;
$\bB_{[n]}$ is a $N\times N$ matrix that only depends on $n$ but is independent of
$m,\al,\be$ and $l$.
\end{prop}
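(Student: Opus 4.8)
The plan is to establish Proposition~\ref{Pro-lpmKdV-CS} by the same Casoratian-verification recipe already used for the lpKdV bilinear system in Proposition~\ref{Pro-lpKdV-CS} (see Ref.~\cite{ZH-H1-GE}): for each of the four bilinear equations \eqref{lpmKdV-Bili-1a}, \eqref{lpmKdV-Bili-1b}, \eqref{lpmKdV-Bili-2a} and \eqref{lpmKdV-Bili-2b}, reduce every shifted Casoratian appearing in it to a Casoratian built over a common block of columns $(\wh{N-3})$ together with two columns chosen among $\Ph(\cdot,\cdot,N-2)$, $\Ph(\cdot,\cdot,N-1)$, $\Ph(\cdot,\cdot,N)$ and their $\al$-shifts, and then collapse the resulting quadratic expression via the Laplace expansion identity of Lemma~\ref{Lemm-Babcd}. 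The relations \eqref{eq:CES-1a}--\eqref{eq:CES-1b} hold for $\Ph(\al,\be,l)$ at \emph{every} value of $(\al,\be)$, in particular at $(\al,\be)=(0,0)$ and $(\al,\be)=(-1,0)$, so they control the $n$- and $m$-shifts of both $f$ and $h$; the $\al$-shift relation \eqref{eq:CES-2a} taken at $\al=-1$, namely $\Ph(0,\be,l)=a\,\Ph(-1,\be,l)+\Ph(-1,\be,l+1)$, is the bridge that, after multilinear expansion, re-expresses $f$, $\wt{f}$, $\wh{f}$ through $h$, $\wt{h}$, $\wh{h}$ and Casoratians carrying one ``mixed'' column, so that the $f$-type and $h$-type determinants are brought onto a common footing inside each bilinear equation.

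Concretely, I would proceed in three moves. \textit{(i)} From \eqref{eq:CES-1a} in the form $\Ph(\al,\be,l+1)=\wt{\Ph}(\al,\be,l)-p\,\Ph(\al,\be,l)$ (and its $q$-analogue from \eqref{eq:CES-1b}) derive the down-shift formulas for $f$ and $h$ under the single shifts in $n$ and in $m$; the mixed shift $\wh{\wt{f}}$ (resp. $\wh{\wt{h}}$) is then obtained by iteration, and it is precisely here --- as in the lpKdV case --- that the naive index reduction breaks down, so that the auxiliary vector $\Ps$ with the backward relation \eqref{eq:CES-1d} and the factorisation \eqref{eq:CES-1e} have to be invoked. \textit{(ii)} Symmetrically, the $\al$-shift $\cd{\phantom{a}}$ of the Casoratians is handled through \eqref{eq:CES-2a}, \eqref{eq:CES-2b}, \eqref{eq:CES-2e} and the factorisation \eqref{eq:CES-2f}, with \eqref{eq:CES-2c} playing, for the $p$-direction, the auxiliary role that \eqref{eq:CES-1d} plays for the $q$-direction. \textit{(iii)} Substitute all of these expressions into \eqref{lpmKdV-Bili-1a}--\eqref{lpmKdV-Bili-2b}; after the bridge identities from \eqref{eq:CES-2a} have put the $f$- and $h$-determinants over the same block and a routine rearrangement is carried out, every monomial takes the form $|\mathbf{G},\mathbf{a},\mathbf{b}|\,|\mathbf{G},\mathbf{c},\mathbf{d}|$ with $\mathbf{a},\mathbf{b},\mathbf{c},\mathbf{d}$ drawn from the columns of index $N-2,N-1,N$ and their $\al$-shifts, the coefficients combine to match those in Lemma~\ref{Lemm-Babcd}, and each bilinear equation collapses to $0=0$.

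The step I expect to be the genuine obstacle is the bookkeeping of the auxiliary matrices $\bA_{[m]}$ and $\bB_{[n]}$. Since $\bA_{[m]}$ is independent of $n$ (and of $l,\al,\be$), it factors out of an $n$-shifted Casoratian of $\Ph=\bA_{[m]}\Ps$ as the scalar $|\bA_{[m]}|$; since $\bB_{[n]}$ is independent of $m$, it factors out of an $m$-shifted Casoratian of $\Ph=\bB_{[n]}\Vph$ as the scalar $|\bB_{[n]}|$. One has to check that, in each of the four bilinear equations, every monomial carries these two scalars to exactly the same total power, so that they cancel uniformly and Lemma~\ref{Lemm-Babcd} may be applied to the stripped Casoratians --- and the hypothesis that $\bA_{[m]}$ depends on $m$ alone and $\bB_{[n]}$ on $n$ alone is exactly what makes this possible. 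One must also confirm that the two factorisations \eqref{eq:CES-1e} and \eqref{eq:CES-2f} are mutually compatible, i.e. that a single column vector $\Ph(\al,\be,l)$ can satisfy \eqref{eq:CES-1} and \eqref{eq:CES-2} simultaneously. It is therefore natural to postpone the explicit construction of such $\Ph$, $\Ps$, $\Vph$ and of the matrices $\bA_{[m]}$, $\bB_{[n]}$ --- the confluent/polynomial data underlying the rational solutions --- to a later stage, and to use them here only through their defining relations. Modulo this bookkeeping the verification runs in parallel with the lpKdV case.
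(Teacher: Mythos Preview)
Your overall recipe --- shift identities followed by Lemma~\ref{Lemm-Babcd} --- is the paper's own, and your remarks about the bookkeeping of $|\bA_{[m]}|$ and $|\bB_{[n]}|$ are well placed. Two points, however, do not match what actually has to happen.

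First, the ``bridge'' works the other way around: rather than expanding $f,\wt f,\wh f$ in terms of $h$-type determinants, one uses \eqref{eq:CES-2a} to rewrite $h$ itself as
\[
h=a^N\big|0_{-1,0},(\wh{N-2})_{0,0}\big|,
\]
i.e.\ a Casoratian whose last $N-1$ columns already coincide with those of $f$, with only the \emph{zeroth} column sitting at $\al=-1$. Consequently the four vectors $\mathbf a,\mathbf b,\mathbf c,\mathbf d$ fed into Lemma~\ref{Lemm-Babcd} are \emph{not} ``columns of index $N-2,N-1,N$ and their $\al$-shifts'': they are the low-index column $0_{-1,0}$, the column $(N{-}2)_{0,0}$, and two $n$- or $m$-shifted copies of the $(N{-}2)$-column (for $\mathcal H_{11}$, for instance, $\dt{\Ph}(0,0,N-2)$ and $\bA_{[m]}\wh{\bA}_{[m]}^{-1}\wh{\Ph}(0,0,N-2)$). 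If you try to force the pattern $N-2,N-1,N$ from the lpKdV case you will not reach a Lemma~\ref{Lemm-Babcd} configuration.

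Second, the paper works with \emph{down}-shifted versions of the bilinear equations (e.g.\ $\dt{\mathcal H}_{11}$, $\dh{\mathcal H}_{12}$, $\dh{\dt{\mathcal H}}_{21}$), because the appendix identities \eqref{eq:lpKdV-iden-1}--\eqref{eq:lpKdV-iden-2} are written for $\dt f,\dh f,\dt h,\dh h$ etc.; this is where the auxiliary matrices enter, through columns such as $\bA_{[m]}\wh{\bA}_{[m]}^{-1}\wh{\Ph}$ and $\bB_{[n]}\wt{\bB}_{[n]}^{-1}\wt{\Ph}$. Also, you need not verify all four equations: any three of \eqref{lpmKdV-Bili-1a}--\eqref{lpmKdV-Bili-2b} imply the fourth, so it suffices to handle $\mathcal H_{11}$, $\mathcal H_{12}$ and $\mathcal H_{21}$. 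With these adjustments your plan goes through exactly as in the paper.
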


\begin{proof}

Because any three bilinear equations in \eqref{lpmKdV-Bili-1}
and \eqref{lpmKdV-Bili-2} can lead to the remainder,
here we just study equations $\mathcal{H}_{11}$, $\mathcal{H}_{12}$ and
$\mathcal{H}_{21}$.

We use shifted $\mathcal{H}_{11}$ in the following form
\begin{eqnarray}
\dt{\mathcal{H}}_{11}\equiv(p-a)\wh{f}\dt h+(q+a)\dt f\wh{h}-(p+q)\wh{\dt{f}}h=0. \label{lpmKdV-Bili-1a-uth}
\end{eqnarray}
By virtue of equation \eqref{eq:CES-2a}, it is easy to know that $h$ in \eqref{fh-lpmKdV}
can be rewritten as
\begin{eqnarray}
\label{eq:lpmKP-fh-solu-re}
h=a^N|0_{-1,0},(\wh{N-2})_{0,0}|.\label{h-lpmKdV}
\end{eqnarray}
In \eqref{lpmKdV-Bili-1a-uth}, for $\dt f$, $\dt h$, $\wh{f}$, $\wh{h}$ and $\wh{\dt{f}}$,
we make use of \eqref{eq:lpKdV-iden-1a}, \eqref{eq:lpmKdV-iden-1c},
\eqref{eq:lpKdV-iden-2b}, \eqref{eq:lpKdV-iden-2d} and \eqref{eq:lpKdV-iden-2e}, respectively, and get
\begin{eqnarray}
&& (pq)^{N-2}[(p-a)\wh{f}\dt h+(q+a)\dt f\wh{h}-(p+q)\wh{\dt{f}}h] \nn \\
&&~~~=a^N\frac{|\wh{\bA}_{[m]}|}{|\bA_{[m]}|}
\big(|0_{-1,0},(\wh{N-3})_{0,0},\dt{\Ph}(0,0,N-2)|
|(\wh{N-2})_{0,0},\bA_{[m]}\wh{\bA}_{[m]}^{-1}\wh{\Ph}(0,0,N-2)|\nn \\
&& ~~~~~-|(\wh{N-2})_{0,0},\dt{\Ph}(0,0,N-2)||0_{-1,0},(\wh{N-3})_{0,0},\bA_{[m]}\wh{\bA}_{[m]}^{-1}\wh{\Ph}(0,0,N-2)| \nn \\
&& ~~~~~-|(\wh{N-3})_{0,0},\dt{\Ph}(0,0,N-2),\bA_{[m]}\wh{\bA}_{[m]}^{-1}\wh{\Ph}(0,0,N-2)|
|0_{-1,0},(\wh{N-2})_{0,0}\big),
\end{eqnarray}
which vanishes in the light of Lemma \ref{Lemm-Babcd}, where $\mathbf{G}=(\wh{N-3})_{0,0}$,
$(\mathbf{a},\mathbf{b},\mathbf{c},\mathbf{d})=(0_{-1,0},\dt{\Ph}(0,0,N-2),N-2_{0,0},\bA_{[m]}\wh{\bA}_{[m]}^{-1}\wh{\Ph}(0,0,N-2))$.

Similarly, for proving $\mathcal{H}_{12}$, we consider its
shifted form
\begin{eqnarray}
\dh{\mathcal{H}}_{12}\equiv(p+a)\dh f\wt{h}+(q-a)\wt{f}\dh h-(p+q)\dh{\wt{f}}h=0.
\label{lpmKdV-Bili-1b-uth}
\end{eqnarray}
In \eqref{lpmKdV-Bili-1b-uth}, for $\dh f$, $\dh h$, $\wt{f}$, $\wt{h}$ and $\wt{\dh{f}}$,
we use \eqref{eq:lpKdV-iden-1b}, \eqref{eq:lpmKdV-iden-1d},
\eqref{eq:lpKdV-iden-2a}, \eqref{eq:lpKdV-iden-2c} and \eqref{eq:lpKdV-iden-2f}, respectively. Then we have
\begin{align}
 \dh{\mathcal{H}}_{12} \equiv
 &(p+a)\dh f\wt{h}+(q-a)\wt{f}\dh h-(p+q)\dh{\wt{f}}h \nn \\
= & a^N(pq)^{-N+2}\frac{|\wt{\bB}_{[n]}|}{|\bB_{[n]}|}
\big(-|0_{-1,0},(\wh{N-3})_{0,0},\bB_{[n]}\wt{\bB}_{[n]}^{-1}\wt{\Ph}(0,0,N-2)||(\wh{N-2})_{0,0},\dh{\Ph}(0,0,N-2)|\nn \\
&+|(\wh{N-2})_{0,0},\bB_{[n]}\wt{\bB}_{[n]}^{-1}\wt{\Ph}(0,0,N-2)||0_{-1,0},(\wh{N-3})_{0,0},\dh{\Ph}(0,0,N-2)| \nn
\\
&-|(\wh{N-3})_{0,0},\dh{\Ph}(0,0,N-2),\bB_{[n]}\wt{\bB}_{[n]}^{-1}\wt{\Ph}(0,0,N-2)|
|0_{-1,0},(\wh{N-2})_{0,0}|\big)\nn \\
= & 0,
\end{align}
where we have utilized Lemma \ref{Lemm-Babcd}, in which $\mathbf{G}=(\wh{N-3})_{0,0}$,
$(\mathbf{a},\mathbf{b},\mathbf{c},\mathbf{d})=(0_{-1,0},\bB_{[n]}\wt{\bB}_{[n]}^{-1}\wt{\Ph}(0,0,N-2),N-2_{0,0},\dh{\Ph}(0,0,N-2))$.

Next, we adopt the down-tilde-hat version of $\mathcal{H}_{21}$, i.e.,
\begin{eqnarray}
\dh{\dt{\mathcal{H}}}_{21}\equiv(p-a)\dh f\dt{h}-(q-a)\dt{f}\dh h-(p-q)\dh{\dt{f}}h=0.
\label{lpmKdV-Bili-2a-uth}
\end{eqnarray}
For \eqref{lpmKdV-Bili-2a-uth}, $\dt f$, $\dh f$, $\dt{h}$, $\dh{h}$, $\dh{\dt{f}}$
are provided by \eqref{eq:lpKdV-iden-1a}, \eqref{eq:lpKdV-iden-1b}, \eqref{eq:lpmKdV-iden-1c},
\eqref{eq:lpmKdV-iden-1d} and \eqref{eq:lpKdV-iden-1g}, respectively. Now we obtain
\begin{eqnarray}
&& \dh{\mathcal{H}}_{21}\equiv(p-a)\dh f\dt{h}-(q-a)\dt{f}\dh h-(p-q)\dh{\dt{f}}h \nn \\
&&~~~~~=a^N(pq)^{-N+2}\big(-|(\wh{N-2})_{0,0},\dh{\Ph}(0,0,N-2)|
|0_{-1,0},(\wh{N-3})_{0,0},\dt{\Ph}(0,0,N-2)|\nn \\
&& ~~~~~~~+|(\wh{N-2})_{0,0},\dt{\Ph}(0,0,N-2)|
|0_{-1,0},(\wh{N-3})_{0,0},\dh{\Ph}(0,0,N-2)| \nn \\
&& ~~~~~~~-|(\wh{N-3})_{0,0},\dh{\Ph}(0,0,N-2),\dt{\Ph}(0,0,N-2)|
|0_{-1,0},(\wh{N-2})_{0,0}|\big)=0
\end{eqnarray}
by using Lemma \ref{Lemm-Babcd}, in which $\mathbf{G}=(\wh{N-3})_{0,0}$,
$(\mathbf{a},\mathbf{b},\mathbf{c},\mathbf{d})=(N-2_{0,0},\dh{\Ph}(0,0,N-2),0_{-1,0},\dt{\Ph}(0,0,N-2))$.
Thus we complete the proof.


\end{proof}

The lpKdV equation \eqref{lpKdV} is related to the lpmKdV equation \eqref{lpmKdV} by
\begin{subequations}
\begin{align}
p-q+\wh{w}-\wt{w}= & \frac{1}{\wh{\wt{v}}_{a}}((p-a)\wh{v}_{a}-(q-a)\wt{v}_{a}) \label{lpKdV-lpmKdV-MT-1a}\\
= & \frac{1}{v_{a}}((p+a)\wt{v}_{a,b}-(q+a)\wh{v}_{a}), \label{lpKdV-lpmKdV-MT-1b} \\
p+q+w-\wh{\wt{w}}= & \frac{1}{\wt{v}_{a}}((p-a)v_{a}+(q+a)\wh{\wt{v}}_{a})\label{lpKdV-lpmKdV-MT-2a}\\
= & \frac{1}{\wh{v}_{a}}((p+a)\wh{\wt{v}}_{a}+(q-a)v_{a}),\label{lpKdV-lpmKdV-MT-2b}
\end{align}
\label{lpKdV-lpmKdV-MT}
\end{subequations}
which serve as the Miura transformation \cite{NC-KdV}.
Substituting dependent transformations \eqref{w} and \eqref{va} into \eqref{lpKdV-lpmKdV-MT},
one can easily find that \eqref{lpKdV-Bili}, \eqref{lpmKdV-Bili-1} and \eqref{lpmKdV-Bili-2} compose
the bilinear forms for the system \eqref{lpKdV-lpmKdV-MT}. Therefore, system \eqref{lpKdV-lpmKdV-MT}
have Casoratian solutions \eqref{w} and \eqref{va},
in which $f,g$ and $h$ are given by \eqref{fg-lpKdV} and \eqref{fh-lpmKdV},
where the basic column vector $\Ph(\al,\be,l)$ satisfies systems \eqref{eq:CES-1}
and \eqref{eq:CES-2}.

\subsection{Casoratian solutions for NQC equation}

The NQC equation was firstly introduced in Ref. \cite{NQC-PLA-1983} by direct linearization method.
This equation has the form
\begin{eqnarray}
\frac{1+(p-a)S(a,b)-(p+b)\wt{S}(a,b)}
{1+(q-a)S(a,b)-(q+b)\wh{S}(a,b)}
=\frac{1-(q+a)\wh{\wt{S}}(a,b)+(q-b)\wt{S}(a,b)}
{1-(p+a)\wh{\wt{S}}(a,b)+(p-b)\wh{S}(a,b)},
\label{eq:NQC}
\end{eqnarray}
where $a$ and $b$ are non-zero constants. Through different parameter choices, \eqref{eq:NQC} can
yield lpKdV equation \eqref{lpKdV} and lpmKdV equation \eqref{lpmKdV}.
In Ref. \cite{NQC-PLA-1983}, it was revealed that there is a Miura transformation
between lpmKdV equation \eqref{lpmKdV} and NQC equation \eqref{eq:NQC}, which is given by
\begin{subequations}
\begin{align}
& 1+(p-a)S(a,b)-(p+b)\wt{S}(a,b)=\wt{v}_av_b, \label{Sab-vawb-1} \\
& 1+(q-a)S(a,b)-(q+b)\wh{S}(a,b)=\wh{v}_av_b, \label{Sab-vawb-2} \\
& 1+(p-b)S(a,b)-(p+a)\wt{S}(a,b)=v_a\wt{v}_b, \label{Sab-vawb-3} \\
& 1+(q-b)S(a,b)-(q+a)\wt{S}(a,b)=v_a\wh{v}_b, \label{Sab-vawb-4}
\end{align}
\label{Sab-vawb}
\end{subequations}
where $v_a$ satisfies the lpmKdV equation \eqref{lpmKdV}.
Equation \eqref{eq:NQC} can be derived from \eqref{Sab-vawb} in the
light of equality $\frac{\wt{v}_av_b}{\wh{v}_av_b}
=\frac{(v_a\wh{v}_b)\wt{\phantom{a}}}{(v_a\wt{v}_b)\wh{\phantom{a}}}$.
Now rather than discussing equation \eqref{eq:NQC}, we turn to
consider system \eqref{Sab-vawb}. Through
transformations
\begin{align}
v_a=\frac{h}{f},~~v_b=\frac{s}{f},~~
S(a,b)=\frac{\tta}{f},
\label{NQC-tran}
\end{align}
system \eqref{Sab-vawb} can be bilinearized into
\begin{subequations}
\begin{align}
& \mathcal{H}_{31}\equiv f\wt{f}+(p-a)\tta \wt{f}-(p+b)\wt{\tta}f-\wt{h}s=0, \label{Sab-vawb-bi-a} \\
& \mathcal{H}_{32}\equiv f\wh{f}+(q-a)\tta \wh{f}-(q+b)\wh{\tta}f-\wh{h}s=0, \label{Sab-vawb-bi-b} \\
& \mathcal{H}_{33}\equiv f\wt{f}+(p-b)\tta \wt{f}-(p+a)\wt{\tta}f-h\wt{s}=0, \label{Sab-vawb-bi-c} \\
& \mathcal{H}_{34}\equiv f\wh{f}+(q-b)\tta \wh{f}-(q+a)\wh{\tta}f-h\wh{s}=0. \label{Sab-vawb-bi-d}
\end{align}
\label{Sab-vawb-bi}
\end{subequations}

For the Casoratian solutions to \eqref{Sab-vawb-bi}, one has
\begin{prop} \label{Pro-NQC-CS}
Bilinear system \eqref{Sab-vawb-bi} admits the solutions
\begin{eqnarray}
\label{eq:NQC-solu}
&& f=|(\wh{N-1})_{0,0}|,~~h=a^N|(\wh{N-1})_{-1,0}|,~~s=b^N|(\wh{N-1})_{0,-1}|, \nn \\
&& \tta=-\frac{1}{a+b}((ab)^N|(\wh{N-1})_{-1,-1}|-|(\wh{N-1})_{0,0}|), \label{eq:NQC-solu-b}
\end{eqnarray}
in which the basic column vector $\Ph(\al,\be,l)$ satisfies \eqref{eq:CES-1a}, \eqref{eq:CES-1b},
\eqref{eq:CES-2a} and
\begin{subequations}
\begin{eqnarray}
b\Ph(\al,\be,l) &=& \d{\Ph}(\al,\be,l)-\Ph(\al,\be,l+1), \label{eq:CES-3a} \\
p\Ch(\al,\be,l) &=& \wt{\Ch}(\al,\be,l)-\Ch(\al,\be,l+1), \label{eq:CES-3b} \\
q\Ch(\al,\be,l) &=& \wh{\Ch}(\al,\be,l)-\Ch(\al,\be,l+1), \label{eq:CES-3c} \\
a\Ch(\al,\be,l) &=& \cd{\Ch}(\al,\be,l)-\Ch(\al,\be,l+1), \label{eq:CES-3d} \\
b\d{\Ch}(\al,\be,l) &=& \Ch(\al,\be,l)+\d{\Ch}(\al,\be,l+1), \label{eq:CES-3e} \\
\Ph(\al,\be,l) &=& \bC_{[\be]}\Ch(\al,\be,l), \label{eq:CES-3f} \\
p\Vap(\al,\be,l) &=& \wt{\Vap}(\al,\be,l)-\Vap(\al,\be,l+1), \label{eq:CES-4a} \\
q\Vap(\al,\be,l) &=& \wh{\Vap}(\al,\be,l)-\Vap(\al,\be,l+1), \label{eq:CES-4b} \\
a\cd{\Vap}(\al,\be,l) &=& \Vap(\al,\be,l)+\cd{\Vap}(\al,\be,l+1), \label{eq:CES-4c} \\
b\Vap(\al,\be,l) &=& \d{\Vap}(\al,\be,l)-\Vap(\al,\be,l+1), \label{eq:CES-4d} \\
\Ph(\al,\be,l) &=& \bD_{[\al]}\Vap(\al,\be,l), \label{eq:CES-4e}
\end{eqnarray}
\label{eq:CES-3-4}
\end{subequations}
where $\Ch(\al,\be,l)=(\chi_1(\al,\be,l),\chi_2(\al,\be,l),\ldots,\chi_N(\al,\be,l))^T$ and
$\Vap(\al,\be,l)=(\varpi_1(\al,\be,l),\varpi_2(\al,\be,l),$ $\ldots,\varpi_N(\al,\be,l))^T$ are two auxiliary vectors and
$\d{\phantom{a}}$ denotes shift with respect to $\be$, i.e., $\d{\Ph}(\al,\be,l)=\Ph(\al,\be+1,l)$;
$\bC_{[\be]}$ is a $N\times N$ matrix that only depends on $\be$ but is independent of
$n,m,\al$ and $l$; $\bD_{[\al]}$ is a $N\times N$ matrix only depends on $\al$ but is independent of
$n,m,\be$ and $l$. In \eqref{eq:NQC-solu-b} we add factor $(ab)^N$ in $\tta$
to formally avoid the constraint $b \neq -a$. One can see the explicit expressions for the two simplest
rational solutions listed in Sec.\ref{RS-lKdV-type}.
\end{prop}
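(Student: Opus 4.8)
The plan is to verify the four bilinear equations of \eqref{Sab-vawb-bi} in turn, following the template of the proof of Proposition~\ref{Pro-lpmKdV-CS}; moreover the symmetries of the construction reduce the task to a single representative. The involution $(p,n)\leftrightarrow(q,m)$ turns $\wt{\phantom{a}}$ into $\wh{\phantom{a}}$, exchanges $\mathcal{H}_{31}\leftrightarrow\mathcal{H}_{32}$ and $\mathcal{H}_{33}\leftrightarrow\mathcal{H}_{34}$, and maps \eqref{eq:CES-1a}$\leftrightarrow$\eqref{eq:CES-1b}, \eqref{eq:CES-3b}$\leftrightarrow$\eqref{eq:CES-3c}, \eqref{eq:CES-4a}$\leftrightarrow$\eqref{eq:CES-4b} while fixing the remaining (parameter-free) conditions. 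The involution $a\leftrightarrow b$ combined with $h\leftrightarrow s$, $\Ch\leftrightarrow\Vap$, $\bC_{[\be]}\leftrightarrow\bD_{[\al]}$ --- i.e.\ the interchange of the $\al$- and $\be$-directions --- fixes $f$ and $\tta$, exchanges $\mathcal{H}_{31}\leftrightarrow\mathcal{H}_{33}$ and $\mathcal{H}_{32}\leftrightarrow\mathcal{H}_{34}$, and carries the $\Ch$-system \eqref{eq:CES-3b}--\eqref{eq:CES-3f} onto the $\Vap$-system \eqref{eq:CES-4a}--\eqref{eq:CES-4e}. Hence it is enough to establish $\mathcal{H}_{31}$; the other three then follow from the two involutions above.

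For $\mathcal{H}_{31}$ I would substitute \eqref{eq:NQC-solu-b} into a suitably down-shifted form of $\mathcal{H}_{31}$, chosen so that only the simplest Casoratian shift identities are needed. Since $\tta$ is a linear combination of the two Casoratians $(ab)^N|(\wh{N-1})_{-1,-1}|$ and $f$ (the factor $(ab)^N$ being a harmless rescaling of $S$), every product in $\mathcal{H}_{31}$ splits, and one should record the elementary identity $1+\tfrac{p-a}{a+b}-\tfrac{p+b}{a+b}=0$, which forces the $f\wt f$-type terms to cancel and explains the particular coefficients in $\tta$. The remaining bilinear relation involves the ``corner'' Casoratians $|(\wh{N-1})_{0,0}|$, $|(\wh{N-1})_{-1,0}|$, $|(\wh{N-1})_{0,-1}|$, $|(\wh{N-1})_{-1,-1}|$ and certain of their tilde-shifts; for each one re-expresses the required shift through the appropriate member of the condition set --- \eqref{eq:CES-1a} for the $n$-shift, \eqref{eq:CES-2a} to lower an $\al$-index, \eqref{eq:CES-3a} to lower a $\be$-index --- so as to bring every Casoratian onto one common $(N-2)$-column block. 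The auxiliary vectors $\Ch$ (with $\bC_{[\be]}$, through \eqref{eq:CES-3b}--\eqref{eq:CES-3f}) and $\Vap$ (with $\bD_{[\al]}$, through \eqref{eq:CES-4a}--\eqref{eq:CES-4e}) play exactly the role that $\Ps$/$\bA_{[m]}$ and $\Vph$/$\bB_{[n]}$ played in Propositions~\ref{Pro-lpKdV-CS} and~\ref{Pro-lpmKdV-CS}: because $v_b$ carries a $\be=-1$ label, $v_a$ an $\al=-1$ label, and $S(a,b)$ both, one needs the representations $\Ph=\bC_{[\be]}\Ch=\bD_{[\al]}\Vap$, in which the $\be$- and $\al$-directional shifts become one-term recursions, in order to push the mixed Casoratians onto the common block. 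After extracting the determinant ratios $|\wt\bC_{[\be]}|/|\bC_{[\be]}|$ and $|\wt\bD_{[\al]}|/|\bD_{[\al]}|$, the expression reduces to the left-hand side of the Laplace identity of Lemma~\ref{Lemm-Babcd}, with $\mathbf{G}$ the common block and $\mathbf{a},\mathbf{b},\mathbf{c},\mathbf{d}$ the four extra columns, and therefore vanishes.

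The principal difficulty is bookkeeping rather than any isolated trick. Relations \eqref{eq:CES-1a}, \eqref{eq:CES-2a}, \eqref{eq:CES-3a} each produce $\Ph(\al,\be,l+1)$, and they must be applied consistently, together with \eqref{eq:CES-3f} and \eqref{eq:CES-4e}, so that once the $\al$- and $\be$-indices attached to $h$, $s$, $\tta$ and their shifts are lowered, all Casoratians entering $\mathcal{H}_{31}$ share the same $(N-2)$-block and the same four residual columns; only then is Lemma~\ref{Lemm-Babcd} directly applicable. As in Proposition~\ref{Pro-lpmKdV-CS}, I would collect the needed Casoratian shift formulae in advance --- most naturally in the appendix --- so that the proof proper is just the substitution and the single appeal to Lemma~\ref{Lemm-Babcd}.
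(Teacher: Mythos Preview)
Your overall strategy matches the paper's: reduce to one representative bilinear equation, substitute the expression for $\tta$ so that the $f\wt f$-contributions cancel (the paper does this by passing to the equivalent system $\mathcal{H}_{41}$--$\mathcal{H}_{44}$ in the auxiliary quantity $\vta=(ab)^N|(\wh{N-1})_{-1,-1}|$), rewrite all Casoratians on a common $(N-2)$-block via the shift formulae, and finish with Lemma~\ref{Lemm-Babcd}. Two points of your execution, however, do not line up with what actually works.

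First, for $\mathcal{H}_{31}$ (equivalently $\mathcal{H}_{41}$) only \emph{one} auxiliary system is needed, namely $\Ch$ with $\bC_{[\be]}$; the $\Vap/\bD_{[\al]}$ system enters only for the $a\leftrightarrow b$ partner $\mathcal{H}_{43}$. The paper's device for landing on a single common block is not to juggle both representations simultaneously but to observe that $f=b^{-N}\d{s}$ and $h=a^N b^{-N}\d{\dc{s}}$, so that after substitution every Casoratian in $\dt{\mathcal{H}}_{41}$ lives at the $\be=-1$ level; the common block is then $(\wh{N-3})_{0,-1}$ and the four extra columns are $0_{-1,-1}$, $\dt\Ph(0,-1,N-2)$, $(N-2)_{0,-1}$, and $\bC_{[\be]}\d{\bC}_{[\be]}^{-1}\d\Ph(0,-1,N-2)$. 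Second, the determinant ratio you extract cannot be $|\wt\bC_{[\be]}|/|\bC_{[\be]}|$: since $\bC_{[\be]}$ is independent of $n$, this ratio is identically $1$. The relevant ratio is $|\d\bC_{[\be]}|/|\bC_{[\be]}|$, arising from the $\be$-shift in \eqref{eq:CES-3e}--\eqref{eq:CES-3f}, exactly parallel to the $|\wh\bA_{[m]}|/|\bA_{[m]}|$ and $|\wt\bB_{[n]}|/|\bB_{[n]}|$ factors in Proposition~\ref{Pro-lpmKdV-CS}. With these two corrections your plan coincides with the paper's proof.
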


\begin{proof}

Noting that the structure of $\theta$, we rewrite \eqref{Sab-vawb-bi} to
\begin{subequations}
\label{eq:Sab-bi-m}
\begin{eqnarray}
&& \mathcal{H}_{41}\equiv(p-a)\wt f \vta-(p+b)\wt{\vta} f+(a+b)\wt{h}s=0, \label{eq:Sab-bi-m-a} \\
&& \mathcal{H}_{42}\equiv(q-a)\wh f \vta-(q+b)\wh{\vta} f+(a+b)\wh{h}s=0, \label{eq:Sab-bi-m-b} \\
&& \mathcal{H}_{43}\equiv(p-b)\wt f \vta-(p+a)\wt{\vta} f+(a+b)h\wt{s}=0, \label{eq:Sab-bi-m-c} \\
&& \mathcal{H}_{44}\equiv(q-b)\wh f \vta-(q+a)\wh{\vta} f+(a+b)h\wh{s}=0, \label{eq:Sab-bi-m-d}
\end{eqnarray}
\end{subequations}
where
\begin{eqnarray*}
\vta=(ab)^N|(\wh{N-1})_{-1,-1}|.
\end{eqnarray*}
For proving \eqref{eq:Sab-bi-m-a}, we make use of the replacements
$f=\frac{1}{b^N}\d{s}$ and $h=\frac{a^N}{b^N}\d{\dc{s}}$ and consider equation
\begin{eqnarray}
\dt{\mathcal{H}}_{41}\equiv(p-a) \d{s} \dt{\vta}-(p+b)\vta \d{\dt{s}}+a^N(a+b)\d{\dc{s}}\dt{s}=0, \label{eq:Sab-bi-m-a1}
\end{eqnarray}
where $\vta=(ab)^N|0_{-1,-1},(\wh{N-2})_{0,-1}|$.
In \eqref{eq:Sab-bi-m-a1}, for $\d{s}$, $\dt{\vta}$, $\d{\dt{s}}$, $\d{\dc{s}}$ and $\dt{s}$,
we use \eqref{eq:NQC-iden-2a}, \eqref{eq:NQC-iden-1c},
\eqref{eq:NQC-iden-2b}, \eqref{eq:NQC-iden-2d} and \eqref{eq:NQC-iden-1a}, respectively. Then we have
\begin{eqnarray}
&& \dt{\mathcal{H}}_{41} \equiv (p-a)\d{s} \dt{\vta}-(p+b)\vta \d{\dt{s}}+a^N(a+b)\d{\dc{s}}\dt{s} \nn \\
&&~~~~~=(ab^2)^N(pb)^{-N+2}\frac{|\d{\bC}_{[\be]}|}{|\bC_{[\be]}|}
\big(|0_{-1,-1},(\wh{N-3})_{0,-1},\dt{\Ph}(0,-1,N-2)|\nn \\
&& ~~~~~~~|(\wh{N-2})_{0,-1},\bC_{[\be]}\d{\bC}_{[\be]}^{-1}\d{\Ph}(0,-1,N-2)|-
|(\wh{N-3})_{0,-1},\dt{\Ph}(0,-1,N-2),\nn \\
&& ~~~~~~~\bC_{[\be]}\d{\bC}_{[\be]}^{-1}\d{\Ph}(0,-1,N-2)|
|0_{-1,-1},(\wh{N-2})_{0,-1}| \nn \\
&& ~~~~~~~-|(\wh{N-2})_{0,-1},\dt{\Ph}(0,-1,N-2)|
|0_{-1,-1},(\wh{N-2})_{0,-1},\bC_{[\be]}\d{\bC}_{[\be]}^{-1}\d{\Ph}(0,-1,N-2)|\big)\nn \\
&& ~~~~~=0,
\end{eqnarray}
where Lemma \ref{Lemm-Babcd} was considered, in which $\mathbf{G}=(\wh{N-3})_{0,-1}$,
$(\mathbf{a},\mathbf{b},\mathbf{c},\mathbf{d})=(0_{-1,-1},\dt{\Ph}(0,-1,N-2),N-2_{0,-1},
\bC_{[\be]}\d{\bC}_{[\be]}^{-1}\d{\Ph}(0,-1,N-2))$. Similarly, one can prove
bilinear equations \eqref{eq:Sab-bi-m-b}-\eqref{eq:Sab-bi-m-d}. Thus we finish the verification.

\end{proof}

Let us conclude Propositions \ref{Pro-lpKdV-CS}-\ref{Pro-NQC-CS} as follows.
\begin{prop}
\label{Pro-lpKdV-type-CS-1}
The Casoratians
\begin{eqnarray}
&& f=|(\wh{N-1})_{0,0}|,~~g=|(\wh{N-2},N)_{0,0}|,~~h=a^N|(\wh{N-1})_{-1,0}|,~~s=b^N|(\wh{N-1})_{0,-1}|, \nn \\
&& \tta=-\frac{1}{a+b}((ab)^N|(\wh{N-1})_{-1,-1}|-|(\wh{N-1})_{0,0}|)
\label{fghstta}
\end{eqnarray}
solve the bilinear equations \eqref{lpKdV-Bili}, \eqref{lpmKdV-Bili-1},
\eqref{lpmKdV-Bili-2} and \eqref{Sab-vawb-bi}, where the basic column vector $\Ph(\al,\be,l)$ satisfies the
condition equation set \eqref{eq:CES-1}, \eqref{eq:CES-2}, \eqref{eq:CES-3-4} together with
\begin{subequations}
\label{eq:CES-2be}
\begin{eqnarray}
b\Ps(\al,\be,l) &=& \d{\Ps}(\al,\be,l)-\Ps(\al,\be,l+1), \label{eq:CES-2b-b} \\
b\Vph(\al,\be,l) &=& \d{\Vph}(\al,\be,l)-\Vph(\al,\be,l+1). \label{eq:CES-2e-b}
\end{eqnarray}
\end{subequations}

\end{prop}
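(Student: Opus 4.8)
The statement is a consolidation of Propositions \ref{Pro-lpKdV-CS}, \ref{Pro-lpmKdV-CS} and \ref{Pro-NQC-CS}, so the plan is to reduce it to those three results rather than repeat any Casoratian manipulation. The first observation is that the enlarged condition equation set quoted here is, apart from the two new relations \eqref{eq:CES-2be}, exactly the union of the hypotheses of the three earlier propositions: system \eqref{eq:CES-1} is what Proposition \ref{Pro-lpKdV-CS} demands of $\Ph(0,0,l)$; \eqref{eq:CES-1} together with \eqref{eq:CES-2} is what Proposition \ref{Pro-lpmKdV-CS} demands of $\Ph(0,0,l)$ and $\Ph(-1,0,l)$; and \eqref{eq:CES-1a}, \eqref{eq:CES-1b}, \eqref{eq:CES-2a}, \eqref{eq:CES-3-4} are what Proposition \ref{Pro-NQC-CS} demands of $\Ph(\al,\be,l)$ at the four corners $\al,\be\in\{0,-1\}$. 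Hence, once a basic column vector $\Ph(\al,\be,l)$ obeying the whole enlarged set is known to exist, the bilinear equations \eqref{lpKdV-Bili}, \eqref{lpmKdV-Bili-1}, \eqref{lpmKdV-Bili-2} and \eqref{Sab-vawb-bi} for the Casoratians \eqref{fghstta} follow by simply invoking the relevant proposition in each case.

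The real content is thus to justify the consistency of the enlarged set, that is, that its solution space is non-empty and that the two relations \eqref{eq:CES-2be} may be adjoined without shrinking it. I would argue as follows. The four ``primary'' recursions \eqref{eq:CES-1a}, \eqref{eq:CES-1b}, \eqref{eq:CES-2a}, \eqref{eq:CES-3a} all have the same shape: a one-step shift of $\Ph$ in one of the four directions $n,m,\al,\be$ equals a scalar multiple of $\Ph$ plus the $l$-shifted $\Ph$. One checks pairwise that the shift operators $\wt{\phantom{a}},\wh{\phantom{a}},\cd{\phantom{a}},\d{\phantom{a}}$ commute on $\Ph$, this compatibility being inherited by the level-$l$ recursion, so that a four-parameter family of solutions $\Ph(\al,\be,l)$ exists (explicit members of this family are recorded later in the paper). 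The auxiliary vectors $\Ps,\Vph,\Ch,\Vap$ are then introduced through the gauge matrices $\bA_{[m]},\bB_{[n]},\bC_{[\be]},\bD_{[\al]}$ via \eqref{eq:CES-1e}, \eqref{eq:CES-2f}, \eqref{eq:CES-3f}, \eqref{eq:CES-4e}, each matrix being constant in the directions along which its auxiliary vector carries its own recursions.

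Within this picture \eqref{eq:CES-2be} is not an extra restriction but a forced consequence. Indeed, $\bA_{[m]}$ is independent of $\be$, so applying the $\be$-shift $\d{\phantom{a}}$ and the $l$-shift to $\Ph=\bA_{[m]}\Ps$ and comparing with \eqref{eq:CES-3a} yields $\bA_{[m]}\bigl(b\Ps(\al,\be,l)-\d{\Ps}(\al,\be,l)+\Ps(\al,\be,l+1)\bigr)=0$; invertibility of $\bA_{[m]}$ then gives \eqref{eq:CES-2b-b}, and the identical argument with $\bB_{[n]}$ gives \eqref{eq:CES-2e-b}. After this, the verification is pure bookkeeping: \eqref{lpKdV-Bili} follows from Proposition \ref{Pro-lpKdV-CS}; \eqref{lpmKdV-Bili-1} and \eqref{lpmKdV-Bili-2} follow from Proposition \ref{Pro-lpmKdV-CS} (using, as in its proof, that any three of the four lpmKdV bilinear equations imply the fourth); and \eqref{Sab-vawb-bi} follows from Proposition \ref{Pro-NQC-CS} through its reformulation \eqref{eq:Sab-bi-m}. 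No Laplace-type identity beyond Lemma \ref{Lemm-Babcd} is needed.

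The step I expect to be the main obstacle is precisely this consistency claim: one must check with care that the four direction-shifts, the $l$-shift recursion, and the four gauge matrices with the stated variable-independence ($\bC_{[\be]}$ depending on $\be$ only, $\bD_{[\al]}$ on $\al$ only, and so on) can all be imposed simultaneously without over-determining $\Ph$. Everything downstream of that --- the term-by-term matching of each bilinear equation --- is already settled by the earlier propositions.
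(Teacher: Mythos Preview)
Your approach is correct and matches the paper's treatment: Proposition~\ref{Pro-lpKdV-type-CS-1} is stated there explicitly as a consolidation of Propositions~\ref{Pro-lpKdV-CS}--\ref{Pro-NQC-CS}, with no separate proof given. Your further observation that the relations \eqref{eq:CES-2be} are in fact forced by \eqref{eq:CES-3a}, \eqref{eq:CES-1e}, \eqref{eq:CES-2f} and the invertibility of $\bA_{[m]}$, $\bB_{[n]}$ is a nice addition; the paper instead simply adjoins \eqref{eq:CES-2be} and remarks that it ensures $v_b$ also satisfies the lpmKdV equation with $a\to b$.
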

The additional equations \eqref{eq:CES-2be} implies that $v_b$ given by \eqref{NQC-tran}
also satisfies the lpmKdV equation \eqref{lpmKdV} with $a\rightarrow b$.
According to the different forms of $\bA_{[m]}$,  $\bB_{[n]}$,  $\bC_{[\be]}$ and $\bD_{[\al]}$,
at least two types of solutions can be obtained (cf. Ref. \cite{ZH-H1-GE}). For example, when $\bA_{[m]}$, $\bB_{[n]}$, $\bC_{[\be]}$ and $\bD_{[\al]}$ are, respectively,
diagonal matrices defined as
\begin{subequations}
\label{ABCD}
\begin{eqnarray}
&& \bA_{[m]}=\mbox{Diag}((q^2-k_j^2)^m)_{N\times N},~~\bB_{[n]}=\mbox{Diag}((p^2-k_j^2)^n)_{N\times N}, \\
&& \bC_{[\be]}=\mbox{Diag}((b^2-k_j^2)^{\be})_{N\times N}, ~~ \bD_{[\al]}=\mbox{Diag}((a^2-k_j^2)^{\al})_{N\times N}.
\end{eqnarray}
\end{subequations}
Then \eqref{fghstta} together with
\begin{subequations}
\label{phi-c}
\begin{eqnarray}
&& \phi_j(\al,\be,l)=\rho^{(0)^{+}}_j(p+k_j)^n(q+k_j)^m(a+k_j)^{\al}(b+k_j)^{\be}k_j^{l} \nn \\
&& \qquad \qquad\qquad +\rho^{(0)^{-}}_j(p-k_j)^n(q-k_j)^m(a-k_j)^{\al}(b-k_j)^{\be}(-k_j)^{l}, \label{phi-ca} \\
&& \psi_j(\al,\be,l)=(q^2-k_j^2)^{-m}\phi_j(\al,\be,l), ~~ \varphi_j(\al,\be,l)=(p^2-k_j^2)^{-n}\phi_j(\al,\be,l), \\
&& \chi_j(\al,\be,l)=(b^2-k_j^2)^{-\be}\phi_j(\al,\be,l), ~~ \varpi_j(\al,\be,l)=(a^2-k_j^2)^{-\al}\phi_j(\al,\be,l)
\end{eqnarray}
\end{subequations}
for $j=1,2,\ldots,N$ provides the usual multi-soliton solutions for the lattice KdV-type equations, where $\{\rho^{(0)^{\pm}}_j\}$ are constants.

\subsection{Rational solutions for lattice KdV-type equations}
\label{RS-lKdV-type}

For $\rm{H3}_{\del}$ and $\rm{Q1}_{\del}$ in the ABS list, the existence of $\del \neq 0$ plays a crucial role
in the procedure of obtaining rational solutions from their soliton solutions \cite{SZ-H3Q1}. While for $\rm{H1}$ and $\rm{H2}$,
 not involving $\del$, it does not work to derive their rational solution from the soliton solutions \eqref{fghstta} with
\eqref{phi-c} by taking limit. To avoid this shortcoming, we consider the following system
\begin{subequations}
\begin{eqnarray}
(p-c)\Ph(\al,\be,l) &=& \wt{\Ph}(\al,\be,l)-\Ph(\al,\be,l+1), \label{eq:ex-CES-1a} \\
(q-c)\Ph(\al,\be,l) &=& \wh{\Ph}(\al,\be,l)-\Ph(\al,\be,l+1), \label{eq:ex-CES-1b}  \\
(a-c)\Ph(\al,\be,l) &=& \cd{\Ph}(\al,\be,l)-\Ph(\al,\be,l+1), \label{eq:ex-CES-1c} \\
(b-c)\Ph(\al,\be,l) &=& \d{\Ph}(\al,\be,l)-\Ph(\al,\be,l+1), \label{eq:ex-CES-1d} \\
(p-c)\Ps(\al,\be,l) &=& \wt{\Ps}(\al,\be,l)-\Ps(\al,\be,l+1), \label{eq:ex-CES-2a} \\
(q+c)\wh{\Ps}(\al,\be,l) &=& \Ps(\al,\be,l)+\wh{\Ps}(\al,\be,l+1), \label{eq:ex-CES-2b} \\
(a-c)\Ps(\al,\be,l) &=& \cd{\Ps}(\al,\be,l)-\Ps(\al,\be,l+1), \label{eq:ex-CES-2c} \\
(b-c)\Ps(\al,\be,l) &=& \d{\Ps}(\al,\be,l)-\Ps(\al,\be,l+1), \label{eq:ex-CES-2d} \\
\Ph(\al,\be,l) &=& \bA_{[m]}\Ps(\al,\be,l), \label{eq:ex-CES-PhPs} \\
(p+c)\wt{\Vph}(\al,\be,l) &=& \Vph(\al,\be,l)+\wt{\Vph}(\al,\be,l+1), \label{eq:ex-CES-3a} \\
(q-c)\Vph(\al,\be,l) &=& \wh{\Vph}(\al,\be,l)-\Vph(\al,\be,l+1), \label{eq:ex-CES-3b} \\
(a-c)\Vph(\al,\be,l) &=& \cd{\Vph}(\al,\be,l)-\Vph(\al,\be,l+1), \label{eq:ex-CES-3c} \\
(b-c)\Vph(\al,\be,l) &=& \d{\Vph}(\al,\be,l)-\Vph(\al,\be,l+1), \label{eq:ex-CES-3d} \\
\Ph(\al,\be,l) &=& \bB_{[n]}\Vph(\al,\be,l), \label{eq:ex-CES-PhVph} \\
(p-c)\Ch(\al,\be,l) &=& \wt{\Ch}(\al,\be,l)-\Ch(\al,\be,l+1), \label{eq:ex-CES-4a} \\
(q-c)\Ch(\al,\be,l) &=& \wh{\Ch}(\al,\be,l)-\Ch(\al,\be,l+1), \label{eq:ex-CES-4b} \\
(a-c)\Ch(\al,\be,l) &=& \cd{\Ch}(\al,\be,l)-\Ch(\al,\be,l+1), \label{eq:ex-CES-4c} \\
(b+c)\d{\Ch}(\al,\be,l) &=& \Ch(\al,\be,l)+\d{\Ch}(\al,\be,l+1), \label{eq:ex-CES-4d} \\
\Ph(\al,\be,l) &=& \bC_{[\be]}\Ch(\al,\be,l), \label{eq:ex-CES-PhCh} \\
(p-c)\Vap(\al,\be,l) &=& \wt{\Vap}(\al,\be,l)-\Vap(\al,\be,l+1), \label{eq:ex-CES-5a} \\
(q-c)\Vap(\al,\be,l) &=& \wh{\Vap}(\al,\be,l)-\Vap(\al,\be,l+1), \label{eq:ex-CES-5b} \\
(a+c)\cd{\Vap}(\al,\be,l) &=& \Vap(\al,\be,l)+\cd{\Vap}(\al,\be,l+1), \label{eq:ex-CES-5c} \\
(b-c)\Vap(\al,\be,l) &=& \d{\Vap}(\al,\be,l)-\Vap(\al,\be,l+1), \label{eq:ex-CES-5d} \\
\Ph(\al,\be,l) &=& \bD_{[\al]}\Vap(\al,\be,l), \label{eq:ex-CES-PhVap}
\end{eqnarray}
\label{eq:ex-CES}
\end{subequations}
where $\bA_{[m]}$, $\bB_{[n]}$, $\bC_{[\be]}$ and $\bD_{[\al]}$ are defined
as in Propositions \ref{Pro-lpKdV-CS}-\ref{Pro-NQC-CS}.
System \eqref{eq:ex-CES} is referred to as the extended condition equation set. For \eqref{ABCD}, solutions to \eqref{eq:ex-CES} can be described as
\begin{subequations}
\label{phi-ce}
\begin{eqnarray}
&& \phi_j(\al,\be,l)=\rho^{(0)^{+}}_j(p+k_j)^n(q+k_j)^m(a+k_j)^{\al}(b+k_j)^{\be}(c+k_j)^{l} \nn \\
&& \qquad \qquad\qquad +\rho^{(0)^{-}}_j(p-k_j)^n(q-k_j)^m(a-k_j)^{\al}(b-k_j)^{\be}(c-k_j)^{l}, \\
&& \psi_j(\al,\be,l)=(q^2-k_j^2)^{-m}\phi_j(\al,\be,l), ~~ \varphi_j(\al,\be,l)=(p^2-k_j^2)^{-n}\phi_j(\al,\be,l), \\
&& \chi_j(\al,\be,l)=(b^2-k_j^2)^{-\be}\phi_j(\al,\be,l), ~~ \varpi_j(\al,\be,l)=(a^2-k_j^2)^{-\al}\phi_j(\al,\be,l).
\end{eqnarray}
\end{subequations}
with $j=1,2,\ldots,N$.

From the previous subsections, one knows that $f_0=|(\wh{N-1})_{0,0}|$, $g_0=|(\wh{N-2},N)_{0,0}|$,
$h_0=a^N|(\wh{N-1})_{-1,0}|$, $s_0=b^N|(\wh{N-1})_{0,-1}|$ and
$\tta_0=-\frac{1}{a+b}((ab)^N|(\wh{N-1})_{-1,-1}|-|(\wh{N-1})_{0,0}|)$
solve the bilinear equations \eqref{lpKdV-Bili}, \eqref{lpmKdV-Bili-1}, \eqref{lpmKdV-Bili-2}
and \eqref{Sab-vawb-bi}, where the basic column vector $\Ph(\al,\be,l)$ satisfies system \eqref{eq:ex-CES}
with $c=0$. In the case of \eqref{ABCD}, by direct calculation one knows that  (cf. Ref. \cite{H-Bou})
\begin{align}
f_c=f_0,~g_c=g_0+Nc f_0,~h_c=h_0,~s_c=s_0,~\tta_c=\tta_0,
\label{fg0-fgdelta}
\end{align}
where in $f_c,~g_c,~h_c,~s_c$ and $\tta_c$,
the basic column vector $\Ph(\al,\be,l)$
satisfies the extended condition equation set \eqref{eq:ex-CES}.
Substituting \eqref{fg0-fgdelta} into bilinear forms \eqref{lpKdV-Bili}, \eqref{lpmKdV-Bili-1}, \eqref{lpmKdV-Bili-2}
and \eqref{Sab-vawb-bi}, one can easily find that
$f_c$, $g_c$, $h_c$, $s_c$ and $\tta_c$ also satisfy these bilinear equations. Therefore, \eqref{fghstta} together with
\eqref{phi-ce} also yields multi-soliton solutions for the lattice KdV-type equations. Furthermore, we can get the following result.
\begin{prop}
\label{Pro-lpKdV-type-CS}
The Casoratians
\begin{eqnarray}
&& f=|(\wh{N-1})_{0,0}|,~~g=|(\wh{N-2},N)_{0,0}|,~~h=a^N|(\wh{N-1})_{-1,0}|,~~s=b^N|(\wh{N-1})_{0,-1}|, \nn \\
&& \tta=-\frac{1}{a+b}((ab)^N|(\wh{N-1})_{-1,-1}|-|(\wh{N-1})_{0,0}|)
\label{fghstta-ex}
\end{eqnarray}
solve the bilinear equations \eqref{lpKdV-Bili}, \eqref{lpmKdV-Bili-1},
\eqref{lpmKdV-Bili-2} and \eqref{Sab-vawb-bi}, where the basic column vector $\Ph(\al,\be,l)$ satisfies the
extended condition equation set \eqref{eq:ex-CES}.
\end{prop}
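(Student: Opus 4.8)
The plan is to reduce the statement to the already-settled case $c=0$, which is exactly Proposition~\ref{Pro-lpKdV-type-CS-1}. Indeed, putting $c=0$ in the extended condition equation set \eqref{eq:ex-CES} recovers precisely the systems \eqref{eq:CES-1}, \eqref{eq:CES-2}, \eqref{eq:CES-3-4} and \eqref{eq:CES-2be}; so if $\Ph^{(0)}(\al,\be,l)$ (with auxiliaries $\Ps^{(0)},\Vph^{(0)},\Ch^{(0)},\Vap^{(0)}$) solves that $c=0$ system, the Casoratians $f_0,g_0,h_0,s_0,\tta_0$ formed from $\Ph^{(0)}$ as in \eqref{fghstta-ex} solve \eqref{lpKdV-Bili}, \eqref{lpmKdV-Bili-1}, \eqref{lpmKdV-Bili-2} and \eqref{Sab-vawb-bi}.

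Given a solution of \eqref{eq:ex-CES} for a general $c$, I would move to $c=0$ by the triangular substitution
\[
\Ph^{(0)}(\al,\be,l)=\sum_{i=0}^{l}\binom{l}{i}(-c)^{\,l-i}\Ph(\al,\be,i),
\]
applied in the same way to each of $\Ph,\Ps,\Vph,\Ch,\Vap$; its inverse is the identical formula with $c$ in place of $-c$, so this is a bijection between the solution family of \eqref{eq:ex-CES} and that of the $c=0$ system. The verification that the substitution intertwines the two condition sets is a direct computation using Pascal's rule $\binom{l+1}{j}=\binom{l}{j}+\binom{l}{j-1}$: every scalar line of \eqref{eq:CES-1}--\eqref{eq:CES-2be} has either the form $Xv(l)=\sigma v(l)-v(l+1)$, where $\sigma$ is one of the shifts $\wt{\phantom{a}},\wh{\phantom{a}},\cd{\phantom{a}},\d{\phantom{a}}$ — this collapses into the matching line of \eqref{eq:ex-CES} with $X$ replaced by $X-c$ — or the form $X\sigma v(l)=v(l)+\sigma v(l+1)$, where one first rewrites the $c=0$ relation as $v(i)=X\sigma v(i)-\sigma v(i+1)$ and then runs the same bookkeeping to land on the line of \eqref{eq:ex-CES} with $X$ replaced by $X+c$. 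The matrix relations $\Ph=\bA_{[m]}\Ps$, $\Ph=\bB_{[n]}\Vph$, $\Ph=\bC_{[\be]}\Ch$, $\Ph=\bD_{[\al]}\Vap$ survive automatically, since $\bA_{[m]},\bB_{[n]},\bC_{[\be]},\bD_{[\al]}$ do not depend on $l$.

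Finally I would compare the Casoratians. Since $\Ph(\al,\be,l)$ is a unit-diagonal upper-triangular combination of $\Ph^{(0)}(\al,\be,0),\dots,\Ph^{(0)}(\al,\be,l)$, column operations give $f_c=f_0$, $h_c=h_0$, $s_c=s_0$ and $\tta_c=\tta_0$, while in $g_c$ the modified last column $\Ph(\al,\be,N)$ carries, on top of $\Ph^{(0)}(\al,\be,N)$ and terms already in the span of the first $N-1$ columns, exactly the extra multiple $Nc\,\Ph^{(0)}(\al,\be,N-1)$, whence $g_c=g_0+Ncf_0$; this is \eqref{fg0-fgdelta}. Now \eqref{lpmKdV-Bili-1}, \eqref{lpmKdV-Bili-2} and \eqref{Sab-vawb-bi} do not involve $g$, and \eqref{lpKdV-Bili-1}, \eqref{lpKdV-Bili-2} are invariant under the shift $g\mapsto g+Ncf$ with $f,h,s,\tta$ held fixed, because the extra terms $Nc(\wh f\wt f-\wt f\wh f)$ and $Nc(f\wh{\wt f}-\wh{\wt f}f)$ are identically zero. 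Hence $(f_c,g_c,h_c,s_c,\tta_c)$ satisfies all four bilinear systems because $(f_0,g_0,h_0,s_0,\tta_0)$ does, by Proposition~\ref{Pro-lpKdV-type-CS-1}. The one place where real care is needed is the intertwining check in the second paragraph for the four ``$+c$''-type auxiliary recursions \eqref{eq:ex-CES-2b}, \eqref{eq:ex-CES-3a}, \eqref{eq:ex-CES-4d}, \eqref{eq:ex-CES-5c}, whose $l$-dependence is recursive rather than explicit; it is only binomial bookkeeping, but that is where the work of the proof lies.
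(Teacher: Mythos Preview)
Your argument is correct, and it is genuinely different from the route the paper takes. The paper does not prove Proposition~\ref{Pro-lpKdV-type-CS} by reducing to the $c=0$ case; it simply says that the verification is ``analogous to the earlier proofs of Propositions~\ref{Pro-lpKdV-CS}--\ref{Pro-NQC-CS}'', i.e.\ one is meant to rerun the Casoratian shift-formula machinery of Appendix~\ref{A:a} and Lemma~\ref{Lemm-Babcd} with the extended recursions \eqref{eq:ex-CES} in place of \eqref{eq:CES-1}--\eqref{eq:CES-3-4}. The relations \eqref{fg0-fgdelta} that you establish in full generality appear in the paper only as a preliminary remark, verified ``by direct calculation'' in the special diagonal case \eqref{ABCD}, and are used there as motivation rather than as the proof.

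What your approach buys is a clean conceptual reduction: the binomial substitution $\Ph^{(0)}(l)=\sum_{i=0}^{l}\binom{l}{i}(-c)^{l-i}\Ph(i)$ is a unit upper-triangular change of basis in the $l$-index that shifts all the recursions by $\pm c$ uniformly, so the entire content of the proposition collapses to Proposition~\ref{Pro-lpKdV-type-CS-1} plus the elementary observation that $g\mapsto g+Ncf$ leaves \eqref{lpKdV-Bili} invariant. This avoids repeating the shift-formula verifications and makes it transparent that the parameter $c$ is a gauge freedom in the $l$-direction. The paper's route, by contrast, has the advantage of being self-contained at the level of determinant identities and of producing the appendix formulae \eqref{eq:lpKdV-iden-1}--\eqref{eq:lpKdV-iden-2} directly in the extended setting, which may be useful elsewhere; but as a proof of the proposition itself your reduction is shorter and more illuminating.
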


Analogous to the earlier proofs of Propositions \ref{Pro-lpKdV-CS}-\ref{Pro-NQC-CS}, we can deduce the verification of Proposition \ref{Pro-lpKdV-type-CS}.

Constant $c$ in \eqref{eq:ex-CES} guarantees that one can get rational solutions for the lattice KdV-type equations. In fact,
$\bA_{[m]}$, $\bB_{[n]}$, $\bC_{[\be]}$ and $\bD_{[\al]}$ are taken, respectively, as
lower triangular Toeplitz matrices (For more properties of this type matrices, one can refer to Ref. \cite{ZDJ-Wronskian})
\begin{subequations}
\begin{eqnarray}
&& \bA_{[m]}=(\ga_{s,j}(q,m))_{N\times N},~~\ga_{s,j}(q,m)=\Biggl\{
\begin{array}{ll}
\frac{1}{(2s-2j)!}\partial^{2(s-j)}_k(q^2-k^2)^m\big|_{k=0},&~s\geq j,\\
0,&~s<j,
\end{array} \\
&& \bB_{[n]}=(\ga_{s,j}(p,n))_{N\times N},~~\ga_{s,j}(p,n)=\Biggl\{
\begin{array}{ll}
\frac{1}{(2s-2j)!}\partial^{2(s-j)}_k(p^2-k^2)^n\big|_{k=0},&~s\geq j,\\
0,&~s<j,
\end{array} \\
&& \bC_{[\be]}=(\ga_{s,j}(b,\be))_{N\times N},~~\ga_{s,j}(b,\be)=\Biggl\{
\begin{array}{ll}
\frac{1}{(2s-2j)!}\partial^{2(s-j)}_k(b^2-k^2)^{\be}\big|_{k=0},&~s\geq j,\\
0,&~s<j,
\end{array} \\
&& \bD_{[\al]}=(\ga_{s,j}(a,\al))_{N\times N},~~\ga_{s,j}(a,\al)=\Biggl\{
\begin{array}{ll}
\frac{1}{(2s-2j)!}\partial^{2(s-j)}_k(a^2-k^2)^{\al}\big|_{k=0},&~s\geq j,\\
0,&~s<j.
\end{array}
\end{eqnarray}
\end{subequations}
The generic basic Casoratian column vector $\Ph(\al,\be,l)$ for \eqref{eq:ex-CES} can then be
taken as
\begin{subequations}
\label{Ph-RS}
\begin{eqnarray}
\Ph(\al,\be,l)&=&\mathcal{A}_+\Ph^+(\al,\be,l)+\mathcal{A}_-\Ph^-(\al,\be,l)
\end{eqnarray}
with
\begin{eqnarray}
\Ph^{\pm}(\al,\be,l)&=&(\phi^{\pm}_0(\al,\be,l),\phi^{\pm}_1(\al,\be,l),\cdots,\phi^{\pm}_{N-1}(\al,\be,l))^T, \\
\phi^\pm_s(\al,\be,l)&=&\frac{1}{(2s)!}\partial_{k}^{2s}[(p\pm k)^n(q\pm k)^m(a\pm k)^{\al}(b\pm k)^{\be}(c\pm k)^{l+\frac{1}{2}}]\big|_{k=0},
\label{phi-1/2}
\end{eqnarray}
\end{subequations}
where $\mathcal{A}_{\pm}$ are two arbitrary non-singular lower
triangular Toeplitz matrices. In \eqref{phi-1/2}
we added $\frac{1}{2}$ in the factor $(c\pm k)^{l+\frac{1}{2}}$
to avoid zero derivative.

It is easy to understand that $S(a,b)$ given by
\eqref{NQC-tran} with \eqref{fghstta-ex} and \eqref{Ph-RS}
satisfies symmetric property $S(a,b)=S(b,a)$.
When $N=2$, solutions
$w,~v_a$ and $S(a,b)$ are, respectively, given by
\begin{subequations}
\begin{eqnarray}
w&=&\frac{2cpq}{pq+2c(mp+nq)}+2c, \\
v_a&=&\frac{-2cpq}{a(pq+2c(mp+nq))}+1, \\
S(a,b)&=&\frac{2cpq}{ab(pq+2c(mp+nq))},\label{wvS-N=2-Sab}
\end{eqnarray}
\label{wvS-N=2}
\end{subequations}
and when $N=3$, solutions $w,~v_a$ and $S(a,b)$ read, respectively,
\begin{subequations}
\begin{eqnarray}
&& w=6cpq\big(pq+2c(mp+nq)\big)^2 \nn \\
&&~~~~ /\big(-3pq((pq)^2-2cpq(mp+nq)-4c^2(mp+nq)^2)\nn \\
&&~~~~ +8c^3(3nmpq(mp+nq)+(n^3-n)q^3+(m^3-m)p^3)\big)+3c, \\
&& v_a=\frac{6cpq}{a^2}\big(pq+2c(mp+nq)\big)\big(2cpq-a(pq+2c(mp+nq))\big)\nn\\
&&~~~~ /\big(-3pq((pq)^2-2cpq(mp+nq)-4c^2(mp+nq)^2)\nn \\
&&~~~~ +8c^3(3nmpq(mp+nq)+(n^3-n)q^3+(m^3-m)p^3)\big)+1,\\
&& S(a,b)=\frac{6cpq}{a^2b^2}\big(-2cpq+a(pq+2c(mp+nq))\big)\big(-2cpq+
   b(pq+2c(mp+nq))\big)\nn \\
&&~~~~ /\big(-3pq((pq)^2-2cpq(mp+nq)-4c^2(mp+nq)^2)\nn \\
&&~~~~ +8c^3(3nmpq(mp+nq)+(n^3-n)q^3+(m^3-m)p^3)\big).\label{wvS-N=3-Sab}
\end{eqnarray}
\label{wvS-N=3}
\end{subequations}
It is noteworthy that \eqref{wvS-N=2-Sab} and \eqref{wvS-N=3-Sab} also hold for $b=-a$.


\section{Rational solutions for the ABS list $\rm{Q3}_{\del}$}

\subsection{Rational solutions for the $\rm{Q3}_{\del}$}

In Ref. \cite{Nijhoff-CM-2009}, soliton solutions for $\rm{Q3_{\del}}$ \eqref{eq:Q3parm} were
written as a linear combination of four terms each of which contains as an essential
ingredient the soliton solution of NQC equation with different values of
the branch point parameters which enter in that equation. In the following, we still
adopt this result to present the rational solution for the $\rm{Q3}_{\del}$.

\begin{Theorem} \label{theorem-1}
The rational solution of ${\rm Q3}_{\del}$ \eqref{eq:Q3parm} is formulated by
\begin{equation}
\label{eq:Q3sol}
\begin{split}
u =& A\digamma(a,b)\left[ 1-(a+b)S(a,b)\right]+B\digamma(a,-b)\left[1-(a-b)S(a,-b)\right]\\
& + C\digamma(-a,b)\left[ 1+(a-b)S(-a,b)\right]+ D\digamma(-a,-b)\left[ 1+(a+b)S(-a,-b)\right],
\end{split}
\end{equation}
in which $S(\pm a,\pm b)$ are the rational solutions of the NQC equation \eqref{eq:NQC} with parameters $\pm a, \pm b$;
the function $\digamma(a,b)$ is defined as
\begin{equation}
\digamma(a,b) = \left(\dfrac{P}{(p-a)(p-b)}\right)^n\left(\dfrac{Q}{(q-a)(q-b)}\right)^m,
\label{eq:vpdef}
\end{equation}
and $P,Q$ are defined by \eqref{elliptic-Q3}; $A$, $B$, $C$ and $D$ are
constants subject to the single constraint
\begin{equation}\label{eq:ABCD}
AD(a+b)^2-BC(a-b)^2=-\frac{\del^2}{16ab}.
\end{equation}
\end{Theorem}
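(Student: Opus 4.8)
The plan is to verify \eqref{eq:Q3sol} by direct substitution into the re-parameterized equation \eqref{eq:Q3parm}, exploiting that the four-term ansatz is precisely the one used for the soliton solutions of ${\rm Q3}_\del$ in \cite{Nijhoff-CM-2009}: the only properties of the functions $S(\pm a,\pm b)$ that enter the computation are the Miura-type relations \eqref{Sab-vawb} (with $a\to\pm a$, $b\to\pm b$) together with the underlying linear shift structure \eqref{eq:ex-CES}, and these all hold for the rational $S$ built from \eqref{NQC-tran}, \eqref{fghstta-ex}, \eqref{Ph-RS} by Proposition \ref{Pro-lpKdV-type-CS}. First I would record the elementary facts about the prefactor $\digamma$: from \eqref{eq:vpdef} one reads off the one-step multipliers $\wt\digamma(a,b)=\frac{P}{(p-a)(p-b)}\digamma(a,b)$ and $\wh\digamma(a,b)=\frac{Q}{(q-a)(q-b)}\digamma(a,b)$, while the curve condition $P^2=(p^2-a^2)(p^2-b^2)$, $Q^2=(q^2-a^2)(q^2-b^2)$ from \eqref{elliptic-Q3} yields the reciprocity $\digamma(a,b)\digamma(-a,-b)=\digamma(a,-b)\digamma(-a,b)=1$. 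These two facts are the ``background'' identities that glue the four pieces of $u$ together.

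Next I would bring in the lpmKdV quantities $v_{\pm a}$, $v_{\pm b}$ associated with the same column vector $\Ph$ and use \eqref{Sab-vawb} to rewrite the tilde- and hat-shifts of each bracket $1-(\al+\be)S(\al,\be)$, $(\al,\be)\in\{(\pm a,\pm b)\}$, as products of shifted $v$'s; this converts the shifts of $u$ into combinations whose building blocks obey the lpmKdV bilinear equations \eqref{lpmKdV-Bili-1}--\eqref{lpmKdV-Bili-2}. Substituting into \eqref{eq:Q3parm} and expanding the bilinear products $u\wh u$, $\wt u\wh{\wt u}$, $u\wt u$, $\wh u\wh{\wt u}$, $\wt u\wh u$, $u\wh{\wt u}$ produces, for each product, a sum over the sixteen pairs of terms. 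One then groups these: the ``opposite-sign'' pairs (the term with $(a,b)$ against the term with $(-a,-b)$, and $(a,-b)$ against $(-a,b)$) collapse by the $\digamma$-reciprocity and feed into the combination that generates the $\frac{\del^2}{4PQ}$ contribution, while the remaining pairs cancel by Lemma \ref{Lemm-Babcd} in exactly the pattern already established for solitons in \cite{Nijhoff-CM-2009}, since those cancellations use only \eqref{Sab-vawb} and the shift relations \eqref{eq:ex-CES}. What survives is precisely a multiple of $\del^2$, and matching its coefficient against the right-hand side of \eqref{eq:Q3parm} forces the single constraint \eqref{eq:ABCD}; the case $\del=0$, where \eqref{eq:ABCD} degenerates to $AD(a+b)^2=BC(a-b)^2$, is consistent with this.

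The main obstacle is organizational rather than conceptual: one must keep the sixteen-fold expansion under control and confirm that every contribution not proportional to $\del^2$ cancels, i.e.\ that the rational $S(\pm a,\pm b)$ satisfy enough of the NQC relations. This is not a new difficulty --- it is the same bookkeeping carried out in \cite{Nijhoff-CM-2009} --- and since Proposition \ref{Pro-lpKdV-type-CS} certifies that the rational $S$ obeys the identical system \eqref{Sab-vawb}, the cancellations transfer verbatim. Equivalently, and more cleanly, one may observe that \eqref{eq:Q3sol} together with \eqref{eq:ABCD} constitutes an identity in the abstract symbols $S(\pm a,\pm b)$ (and the associated $v$'s and $\digamma$'s) subject only to \eqref{Sab-vawb} and \eqref{eq:ex-CES}, hence it applies to \emph{any} solution of that system, in particular to the rational one constructed in Section~3. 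The one genuinely new point left to check is then to isolate the unique $\del^2$-term and read off \eqref{eq:ABCD}, which is a short computation once the rest has cancelled.
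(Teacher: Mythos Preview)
Your approach is essentially the same as the paper's: both argue that the verification of \eqref{eq:Q3sol} carried out in \cite{Nijhoff-CM-2009} for solitons relies only on structural relations (the Miura systems, bilinear identities, and shift rules) that Proposition~\ref{Pro-lpKdV-type-CS} shows are equally satisfied by the rational $S(\pm a,\pm b)$, so the proof transfers verbatim. The only point worth tightening is your inventory of required ingredients: the paper explicitly lists, in addition to \eqref{Sab-vawb}, the lpKdV equation \eqref{lpKdV}, the lpKdV--lpmKdV Miura transformation \eqref{lpKdV-lpmKdV-MT}, and the symmetry $S(a,b)=S(b,a)$ as inputs to the argument in \cite{Nijhoff-CM-2009}, and you should mention these as well (all are established for the rational case in Section~\ref{RS-lKdV-type}).
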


The proof of soliton solutions to the $\rm{Q3}_{\del}$ presented in Ref. \cite{Nijhoff-CM-2009}
is based on lpKdV equation \eqref{lpKdV}, Miura transformations \eqref{lpKdV-lpmKdV-MT} and \eqref{Sab-vawb}, as well as symmetric property
$S(a,b)=S(b,a)$. Since rational solutions for these equations have been shown in previous section,
naturally we achieve the verification of Theorem \ref{theorem-1}. We omit it here.

\subsection{Degeneration}\label{sec:ABS}

We now consider the problem of degeneration of rational solutions into the remaining ``lower'' equations
$\rm{Q2, Q1_{\del}, H3_{\del}, H2}$ and $\rm{H1}$ in the ABS list \eqref{ABS-list-p}. To do so we follow the degenerations given in Ref. \cite{Nijhoff-CM-2009} which
are limits on the parameters $a$ and $b$ and the dependent variable $u$, where a small parameter
$\epsilon$ is introduced, and all degenerations are obtained in the limit $\epsilon\rightarrow 0$.
The degeneration relations between ${\rm Q3_{\del}}$ and ``lower equations'' ${\rm Q2, Q1_{\del}, H3_{\del}, H2}$ and ${\rm H1}$
can be depicted by Fig.1.
\vspace{-5mm}
\begin{center}
\begin{displaymath}
\xymatrix{ \boxed{\mathrm{Q3_{\del}}} \ar[d] \ar[r] & \boxed{\mathrm{Q2}} \ar[d] \ar[r] & \boxed{\mathrm{Q1_{\del}}} \ar[d] \\
\boxed{\mathrm{H3_{\del}}} \ar[r] & \boxed{\mathrm{H2}} \ar[r] & \boxed{\mathrm{H1}} }
\end{displaymath}
\begin{minipage}{11cm}{\footnotesize~~~~~~~~~~~~~~~~~~~~~~~~~~~~~~~~~
{Fig.1} Degeneration relation}
\end{minipage}
\end{center}

\subsubsection{${\rm Q3_{\del}} \longrightarrow {\rm Q2}$}

The degeneration from ${\rm Q3_{\del}}$ to ${\rm Q2}$ is
\begin{eqnarray}
b=a(1-2\epsilon),~~ u\longrightarrow \dfrac{\delta}{4a^2}\left(\dfrac{1}{\epsilon}+1+(1+2u)\epsilon\right).
\end{eqnarray}
Making the following replacements of constants in \eqref{eq:Q3sol}
\begin{equation}
\begin{array}{l}
A \rightarrow \dfrac{\delta}{4a^2}A\epsilon,~~B \rightarrow \dfrac{\delta}{8a^2}\left(\dfrac{1}{\epsilon}+1-\xi_0+((3+\xi_0^2)/2+2AD)\epsilon\right),\\
C \rightarrow \dfrac{\delta}{8a^2}\left(\dfrac{1}{\epsilon}+1+\xi_0+((3+\xi_0^2)/2+2AD)\epsilon\right),~~
D \rightarrow \dfrac{\delta}{4a^2}D\epsilon,
\end{array}
\label{eq:Q2const}
\end{equation}
we find the rational solution for ${\rm Q2}$:
\begin{equation}
\begin{split}
u =& \dfrac{1}{4}((\xi+\xi_0)^2+1)+a(\xi+\xi_0)\UM(-a,a)+a^2\left(\UU(a,-a)+\UU(-a,a)\right) + \\
& AD + \dfrac{1}{2}A\rho(a)(1-2a\UM(a,a))+\dfrac{1}{2}D\rho(-a)(1+2a\UM(-a,-a)),
\end{split}
\label{eq:Q2sol}
\end{equation}
in which
\begin{equation}
\begin{array}{rl}
\xi=& 2a\left(\dfrac{p}{a^2-p^2}n+\dfrac{q}{a^2-q^2}m\right), ~~\rho(a)=\left(\frac{p+a}{p-a}\right)^n\left(\frac{q+a}{q-a}\right)^m, \\
\UU(a,-a)=& -\frac{1}{2a}\partial_{\epsilon}S(a,2a\epsilon-a)\big|_{\epsilon=0}, ~~
\UU(-a,a) = \frac{1}{2a}\partial_{\epsilon}S(-a,a-2a\epsilon)\big|_{\epsilon=0}, \\
\end{array}
\label{eq:xietaUU-1}
\end{equation}
and $\xi_0, A$ and $D$ are the constants which may be chosen arbitrarily.

\subsubsection{${\rm Q2} \longrightarrow {\rm Q1_{\del}}$}

To achieve the rational solution for ${\rm Q1_{\del}}$ we degenerate from \eqref{eq:Q2sol} by taking
\[u\longrightarrow\dfrac{\delta^2}{4\epsilon^2}+\dfrac{1}{\epsilon}u.\]
Meanwhile, we replace the constants appearing in solution \eqref{eq:Q2sol} by
\begin{equation}
A \rightarrow \dfrac{2A}{\epsilon},~~
D \rightarrow \dfrac{2D}{\epsilon},~~
\xi_0 \rightarrow \xi_0 + \dfrac{2B}{\epsilon}.
\label{eq:Q1const}
\end{equation}
Then the rational solutions for ${\rm Q1_{\del}}$ can be described as
\begin{equation}
u=A\rho(a)(1-2a\UM(a,a)) + B(\xi+\xi_0+2a\UM(-a,a)) + D\rho(-a)(1+2a\UM(-a,-a)),
\label{eq:Q1sol}
\end{equation}
where constants
$A$, $B$, $D$ and $\xi_0$ are chosen to satisfy the single constraint
\begin{equation}
AD +\dfrac{1}{4}B^2=\dfrac{\delta^2}{16}.
\end{equation}


\subsubsection{${\rm Q3_{\del}} \longrightarrow {\rm H3_{\del}}$}

By setting
\begin{equation}
b=\dfrac{1}{\epsilon^2},~~u\longrightarrow \epsilon^3\dfrac{\sqrt{\delta}}{2}u,
\end{equation}
and
\begin{equation}
A\rightarrow \epsilon^3\dfrac{\sqrt{\delta}}{2}A,~~
B\rightarrow \epsilon^3\dfrac{\sqrt{\delta}}{2}B,~~
C\rightarrow \epsilon^3\dfrac{\sqrt{\delta}}{2}C,~~
D\rightarrow \epsilon^3\dfrac{\sqrt{\delta}}{2}D,
\label{eq:H3const}
\end{equation}
rational solution to ${\rm H3_{\del}}$ can be degenerated from \eqref{eq:Q3sol}, which is of form
\begin{equation}\label{eq:H3sol}
u=(A+(-1)^{n+m}B)\vrr v_a+((-1)^{n+m}C+D)\vrr^{-1}v_{-a},
\end{equation}
in which $v_a$ is defined by \eqref{va} and
\begin{eqnarray}
\vrr=\bigg(\frac{P}{a-p}\bigg)^n\bigg(\frac{Q}{a-q}\bigg)^m,
\end{eqnarray}
where parameters $P$ and $Q$ are related to $p$ and $q$ by \eqref{eq:parcurves} and the constants $A$, $B$, $C$ and $D$ are subject to the constraint
$$ AD-BC = \dfrac{\delta}{4a}.$$

\subsubsection{${\rm Q2} \longrightarrow {\rm H2}$}

The degeneration from ${\rm Q2}$ to ${\rm H2}$ can be arrived at by setting
\begin{equation}
a=\dfrac{1}{\epsilon},~~u\longrightarrow\frac{1}{4}+\epsilon^2u.
\label{eq:U-exp-3}
\end{equation}
Substituting \eqref{eq:U-exp-3} into \eqref{eq:Q2sol}
combined with
\begin{equation}
\begin{array}{rl}
a\UM(-a,a) \longrightarrow & -\epsilon S^{(0)} + O(\epsilon^2),\\
a\UM(a,a) \longrightarrow & \epsilon S^{(0)}-2\epsilon^2 S^{(1)} + O(\epsilon^3),\\
a\UM(-a,-a) \longrightarrow & \epsilon S^{(0)}+2\epsilon^2 S^{(1)} + O(\epsilon^3),\\
a^2(\UU(-a,a)+\UU(a,-a)) \longrightarrow & 2\epsilon^2 S^{(1)} + O(\epsilon^3),
\end{array}
\label{eq:U-exp-33}
\end{equation}
and the following choice for the constants
\begin{equation}
A\rightarrow A(\epsilon+\zeta_1\epsilon^2/2),~~
D\rightarrow A(-\epsilon+\zeta_1\epsilon^2/2),~~
\xi_0\rightarrow \epsilon \zeta_0
\end{equation}
with unconstrained constants $\zeta_0,\zeta_1$, the rational solution for ${\rm H2}$ reads
\begin{equation}\label{eq:H2sol}
u=\dfrac{1}{4}(\zeta+\zeta_0)^2 -(\zeta+\zeta_0) S^{(0)}+2S^{(1)}-A^2+(-1)^{n+m}A(\zeta+\zeta_1/2-2S^{(0)}),
\end{equation}
where
\begin{equation}
\zeta=2(np+mq).
\label{eq:zdef}
\end{equation}

From \eqref{wvS-N=2} and \eqref{wvS-N=3} one know that when $N=2$, $S^{(0)}$ and $S^{(1)}$ are taken as
\begin{eqnarray}
S^{(0)}=\frac{2cpq}{pq+2c(mp+nq)}, ~~S^{(1)}=0,
\end{eqnarray}
and when $N=3$, $S^{(0)}$ and $S^{(1)}$ are described as
\begin{eqnarray*}
&& S^{(0)}=6cpq(pq+2c(mp+nq))^2/\big(-3pq((pq)^2-2cpq(mp+nq)-4c^2(mp+nq)^2)\nn \\
&&~~~~ +8c^3(3nmpq(mp+nq)+(n^3-n)q^3+(m^3-m)p^3)\big), \\
&& S^{(1)}=12(cpq)^2(pq+2c(mp+nq))/\big(-3pq((pq)^2-2cpq(mp+nq)-4c^2(mp+nq)^2)\nn \\
&&~~~~ +8c^3(3nmpq(mp+nq)+(n^3-n)q^3+(m^3-m)p^3)\big).
\end{eqnarray*}

\subsubsection{${\rm Q1}_{\del} \longrightarrow {\rm H1}$}

The rational solution to ${\rm H1}$ can be obtained from \eqref{eq:Q1sol} through degeneration.
Substituting
\begin{equation}
a=\dfrac{1}{\epsilon},~~u\rightarrow \epsilon \delta u
\end{equation}
into \eqref{eq:Q1sol} and using
\begin{equation}
A\rightarrow  \dfrac{\delta}{2} A(1+\zeta_1\epsilon),~~
D\rightarrow  \dfrac{\delta}{2} A(-1+\zeta_1\epsilon),~~
B\rightarrow  \delta B,~~
\xi_0 \rightarrow  \epsilon \zeta_0
\end{equation}
with constants $\zeta_0,\zeta_1$, we have
\begin{equation}\label{eq:H1sol}
u=B(\zeta+\zeta_0-2S^{(0)})+(-1)^{n+m}A(\zeta+\zeta_1-2S^{(0)}),
\end{equation}
which presents rational solution of H1,
where $\zeta_0$, $\zeta_1$, $A$ and $B$ satisfies
$ A^2-B^2=-\frac{1}{4}$. Here $S^{(0)}$ is same as the one in solution \eqref{eq:H2sol}.


\section{Conclusions}

In the present paper, we investigate the rational solutions for the whole ABS list except for
${\rm Q4}$. The procedure is different from the one given in Ref. \cite{ZZ-ABS-RS}.
We make use of Hirota's bilinear method together with the key results given in Ref. \cite{Nijhoff-CM-2009}.
In order to express the rational solutions for the lattice KdV-type equations uniformly, we consider an
extended condition equation set \eqref{eq:ex-CES}, where a constant $c$ is introduced. This constant
is indispensable and allows one to take spectral parameters' limit $k\rightarrow 0$.
On basis of rational solutions for the lpKdV equation and the Miura transformations \eqref{lpKdV-lpmKdV-MT} and \eqref{Sab-vawb},
together with symmetric property $S(a,b)=S(b,a)$,
one can finish the proof of Theorem \ref{theorem-1}. Theorem \ref{theorem-1} reveals a fact that solutions
to ${\rm Q3_{\del}}$ can be expressed as a linear combination of four different solutions of the NQC equation with the parameters $a, b$ changing signs
holds not only for solitons \cite{Nijhoff-CM-2009} but also for rational solutions.
Besides, from the relation \eqref{eq:Q3sol} one may consider \eqref{Sab-vawb-bi} as a bilinear form of ${\rm Q3_{\del}}$.
Based on the degeneration relations between ${\rm Q3_{\del}}$ and ``lower equations'' ${\rm Q2, Q1_{\del}, H3_{\del}, H2}$ and ${\rm H1}$,
the rational solutions for the latter equations are also derived.

\subsection*{Acknowledgements}

This project is supported by the NSF of China (Nos. 11371241, 11631007, 11301483, 11401529), SRF of
the DPHE of China (No. 20113108110002) and the Natural Science Foundation of Zhejiang Province (No. LY17A010024).

\appendix

\section{Casoratian shift formulae}
\label{A:a}

We list shift formulae for the Casoratians
\begin{equation*}
f=|(\wh{N-1})_{0,0}|,~g=|(\wh{N-2},N)_{0,0}|,~h=|(\wh{N-1})_{-1,0}|,~s=|(\wh{N-1})_{0,-1}|,~
\vta=|(\wh{N-1})_{-1,-1}|,
\end{equation*}
where the basic column vector $\Ph(\al,\be,l)$ satisfies the relations \eqref{eq:CES-1}, \eqref{eq:CES-2}
and \eqref{eq:CES-3-4}.
\begin{subequations}
\label{eq:lpKdV-iden-1}
\begin{eqnarray}
&&  p^{N-2}\dt f=-|(\wh{N-2})_{0,0},\dt{\Ph}(0,0,N-2)|, \label{eq:lpKdV-iden-1a} \\
&&  q^{N-2}\dh f=-|(\wh{N-2})_{0,0},\dh{\Ph}(0,0,N-2)|, \label{eq:lpKdV-iden-1b} \\
&& (p-a)p^{N-2}\dt h=a^N|0_{-1,0},(\wh{N-3})_{0,0},\dt{\Ph}(0,0,N-2)|, \label{eq:lpmKdV-iden-1c} \\
&& (q-a)q^{N-2}\dh h=a^N|0_{-1,0},(\wh{N-3})_{0,0},\dh{\Ph}(0,0,N-2)|, \label{eq:lpmKdV-iden-1d} \\
&&  p^{N-2}\dt s=-b^N|(\wh{N-2})_{0,-1},\dt{\Ph}(0,-1,N-2)|, \label{eq:NQC-iden-1a} \\
&&  q^{N-2}\dh s=-b^N|(\wh{N-2})_{0,-1},\dh{\Ph}(0,-1,N-2)|, \label{eq:NQC-iden-1b} \\
&& (p-a)p^{N-2}\dt{\vta}=(ab)^N|0_{-1,-1},(\wh{N-3})_{0,-1},\dt{\Ph}(0,-1,N-2)|, \label{eq:NQC-iden-1c} \\
&& (q-a)q^{N-2}\dh{\vta}=(ab)^N|0_{-1,-1},(\wh{N-3})_{0,-1},\dh{\Ph}(0,-1,N-2)|, \label{eq:NQC-iden-1c1} \\
&& (p-b)p^{N-2}\dt{\vta}=(ab)^N|0_{-1,-1},(\wh{N-3})_{-1,0},\dt{\Ph}(-1,0,N-2)|, \label{eq:NQC-iden-1d} \\
&& (q-b)q^{N-2}\dh{\vta}=(ab)^N|0_{-1,-1},(\wh{N-3})_{-1,0},\dh{\Ph}(-1,0,N-2)|, \label{eq:NQC-iden-1d1} \\
&& p^{N-2}(\dt g+p\dt f)=-|(\wh{N-3},N-1)_{0,0},\dt{\Ph}(0,0,N-2)|, \label{eq:lpKdV-iden-1e} \\
&& q^{N-2}(\dh g+q\dh f)=-|(\wh{N-3},N-1)_{0,0},\dh{\Ph}(0,0,N-2)|, \label{eq:lpKdV-iden-1f} \\
&& (p-q)p^{N-2}q^{N-2}\dh{\dt{f}}=|(\wh{N-3})_{0,0},\dh{\Ph}(0,0,N-2),\dt{\Ph}(0,0,N-2)|, \label{eq:lpKdV-iden-1g}
\end{eqnarray}
\end{subequations}
and
\begin{subequations}
\label{eq:lpKdV-iden-2}
\begin{eqnarray}
&& p^{N-2}\wt{f}=\frac{|\wt{\bB}_{[n]}|}{|\bB_{[n]}|}|(\wh{N-2})_{0,0},
\bB_{[n]}\wt{\bB}_{[n]}^{-1}\wt{\Ph}(0,0,N-2)|,\label{eq:lpKdV-iden-2a}\\
&& q^{N-2}\wh{f}=\frac{|\wh{\bA}_{[m]}|}{|\bA_{[m]}|}|(\wh{N-2})_{0,0},
\bA_{[m]}\wh{\bA}_{[m]}^{-1}\wh{\Ph}(0,0,N-2)|,\label{eq:lpKdV-iden-2b}\\
&& b^{N-2}\d{s}=b^N\frac{|\d{\bC}_{[\be]}|}{|\bC_{[\be]}|}|(\wh{N-2})_{0,-1},
\bC_{[\be]}\d{\bC}_{[\be]}^{-1}\d{\Ph}(0,-1,N-2)|,\label{eq:NQC-iden-2a}\\
&& a^{N-2}\cd{h}=a^N\frac{|\cd{\bD}_{[\al]}|}{|\bD_{[\al]}|}|(\wh{N-2})_{-1,0},
\bD_{[\al]}\cd{\bD}_{[\al]}^{-1}\cd{\Ph}(-1,0,N-2)|,\label{eq:lpKdV-iden-2b1}\\
&& (p+a)p^{N-2}\wt{h}=a^N\frac{|\wt{\bB}_{[n]}|}{|\bB_{[n]}|}
|0_{-1,0},(\wh{N-3})_{0,0},\bB_{[n]}\wt{\bB}_{[n]}^{-1}\wt{\Ph}(0,0,N-2)|,\label{eq:lpKdV-iden-2c}\\
&& (q+a)q^{N-2}\wh{h}=a^N\frac{|\wh{\bA}_{[m]}|}{|\bA_{[m]}|}
|0_{-1,0},(\wh{N-3})_{0,0},\bA_{[m]}\wh{\bA}_{[m]}^{-1}\wh{\Ph}(0,0,N-2)|, \label{eq:lpKdV-iden-2d}\\
&& (p+q)p^{N-2}q^{N-2}\wh{\dt{f}}=\frac{|\wh{\bA}_{[m]}|}{|\bA_{[m]}|}|(\wh{N-3})_{0,0},\dt{\Ph}(0,0,N-2),
\bA_{[m]}\wh{\bA}_{[m]}^{-1}\wh{\Ph}(0,0,N-2)|,\label{eq:lpKdV-iden-2e}\\
&& (p+q)p^{N-2}q^{N-2}\wt{\dh{f}}=\frac{|\wt{\bB}_{[n]}|}{|\bB_{[n]}|}|(\wh{N-3})_{0,0},\dh{\Ph}(0,0,N-2),
\bB_{[n]}\wt{\bB}_{[n]}^{-1}\wt{\Ph}(0,0,N-2)|,\label{eq:lpKdV-iden-2f}\\
&& (p+b)p^{N-2}b^{N-2}\d{\dt{s}}=b^N\frac{|\d{\bC}_{[\be]}|}{|\bC_{[\be]}|}
|(\wh{N-3})_{0,-1},\dt{\Ph}(0,-1,N-2),\bC_{[\be]}\d{\bC}_{[\be]}^{-1}\d{\Ph}(0,-1,N-2)|,\label{eq:NQC-iden-2b}\\
&& (q+b)q^{N-2}b^{N-2}\d{\dh{s}}=b^N\frac{|\d{\bC}_{[\be]}|}{|\bC_{[\be]}|}
|(\wh{N-3})_{0,-1},\dh{\Ph}(0,-1,N-2),\bC_{[\be]}\d{\bC}_{[\be]}^{-1}\d{\Ph}(0,-1,N-2)|,\label{eq:NQC-iden-2c}\\
&& (a+b)b^{N-2}\d{\dc{s}}=b^N\frac{|\d{\bC}_{[\be]}|}{|\bC_{[\be]}|}
|0_{-1,-1},(\wh{N-3})_{0,-1},\bC_{[\be]}\d{\bC}_{[\be]}^{-1}\d{\Ph}(0,-1,N-2)|,\label{eq:NQC-iden-2d}\\
&& (p+a)p^{N-2}a^{N-2}\cd{\dt{h}}=a^N\frac{|\cd{\bD}_{[\al]}|}{|\bD_{[\al]}|}
|(\wh{N-3})_{-1,0},\dt{\Ph}(-1,0,N-2),\bD_{[\al]}\cd{\bD}_{[\al]}^{-1}\cd{\Ph}(-1,0,N-2)|,\label{eq:NQC-iden-2b1}\\
&& (q+a)q^{N-2}a^{N-2}\cd{\dh{h}}=a^N\frac{|\cd{\bD}_{[\al]}|}{|\bD_{[\al]}|}
|(\wh{N-3})_{-1,0},\dh{\Ph}(-1,0,N-2),\bD_{[\al]}\cd{\bD}_{[\al]}^{-1}\cd{\Ph}(-1,0,N-2)|,\label{eq:NQC-iden-2c1}\\
&& (a+b)a^{N-2}\cd{\dd{h}}=a^N\frac{|\cd{\bD}_{[\al]}|}{|\bD_{[\al]}|}
|0_{-1,-1},(\wh{N-3})_{-1,0},\bD_{[\al]}\cd{\bD}_{[\al]}^{-1}\cd{\Ph}(-1,0,N-2)|,\label{eq:NQC-iden-2d1}\\
&& q^{N-2}(\wh{g}-q\wh{f})=
\frac{|\wh{\bA}_{[m]}|}{|\bA_{[m]}|}|(\wh{N-3},N-1)_{0,0},\bA_{[m]}\wh{\bA}_{[m]}^{-1}\wh{\Ph}(0,0,N-2)|.\label{eq:lpKdV-iden-2g}
\end{eqnarray}
\end{subequations}



\end{document}